\documentclass[journal,10pt,twocolumn,twoside]{IEEEtran}
\usepackage{amsmath,amsfonts,amssymb,amsbsy,bm,paralist,theorem,color}
\usepackage{graphicx,algorithmic,algorithm,relsize}
\usepackage{multicol}
\usepackage{multirow}
\usepackage{subcaption}
\usepackage[caption=false,font=normalsize,labelfont=sf,textfont=sf]{subfig}
\usepackage{cite}
\usepackage{cuted}
\usepackage{pifont}
\usepackage{comment}
\usepackage{xcolor}

\newtheorem{rem}{Remark}

\newtheorem{prop}{Proposition}

\newcommand{\cmark}{\textcolor{green!70!black}{\ding{51}}}  
\newcommand{\xmark}{\textcolor{red}{\ding{55}}}             

\graphicspath{{fig/}}

\definecolor{orange}{RGB}{255,107,0}
\definecolor{green}{RGB}{0,160,20}

\begin{document}
\title{Power-Efficient Optimization for Coexisting Semantic and Bit-Based Users in NOMA Networks}

\author{Ximing Xie,~\IEEEmembership{Member,~IEEE}, Fang Fang,~\IEEEmembership{Senior Member,~IEEE}, Lan Zhang,~\IEEEmembership{Member,~IEEE}, \\and Xianbin Wang,~\IEEEmembership{Fellow,~IEEE}
\thanks{Ximing Xie, Fang Fang and Xianbin Wang are with the Department of Electrical and Computer Engineering, Western University, London, ON N6A 3K7, Canada and Fang Fang is also with the Department of Computer Science, Western University, London, ON N6A 3K7, Canada. (e-mail: \{xxie269, fang.fang, xianbin.wang\}@uwo.ca).}
\thanks{Lan Zhang is with the Department of Electrical and Computer Engineering, Clemson University, Clemson, SC 29634 USA. (e-mail: lan7@clemson.edu).}

}\maketitle

\begin{abstract}
Semantic communications, which focus on transmitting the semantic meaning of data, have been proposed as a novel paradigm for achieving efficient and relevant communication. Meanwhile, non-orthogonal multiple access (NOMA) enhances spectral efficiency by allowing multiple users to share the same spectrum. However, semantic communications are unlikely to fully replace conventional bit-level communications in the near future, as the latter remain dominant. Therefore, integrating semantic users into a NOMA network alongside conventional bit-based users becomes a meaningful approach to improve both transmission and spectrum efficiency. Nonetheless, due to the lack of a mathematical model that accurately characterizes the relationship between the performance of semantic transceivers and wireless resource allocation, enhancing performance through resource optimization remains a challenge. Moreover, successive interference cancellation (SIC), a key technique in NOMA, introduces additional complexity in system design and implementation. To address these challenges, this paper first improves the deep semantic communication (DeepSC) transceiver to make it adaptive to varying wireless transmission conditions. Subsequently, a data-driven regression approach is employed to develop a mathematical model that captures the impact of wireless resources on semantic transceiver performance. In parallel, a multi-cluster hybrid NOMA (H-NOMA) framework is proposed, where each cluster consists of one semantic user and one bit-based user, to mitigate the complexity introduced by SIC. A total transmit power minimization problem is then formulated by jointly optimizing the beamforming design, bandwidth allocation, and semantic symbol factor. The formulated problem is non-convex and challenging to solve directly. To tackle this, a closed-form optimal solution for the beamforming vectors is first derived. Then, a block coordinate descent (BCD)-based algorithm is developed to determine the bandwidth allocation, while an exhaustive search method is used to optimize the semantic symbol factor. Simulation results illustrate the advantages of the semantic communication over the conventional bit-level communication and verify the superior performance of the proposed framework compared with existing benchmark schemes.

\end{abstract}

\begin{IEEEkeywords}
bandwidth allocation, beamforming design, non-orthogonal multiple access (NOMA), semantic communication
\end{IEEEkeywords}

\section{Introduction}

Rapid growth in connected devices and wireless applications, such as remote healthcare (RHC) \cite{matre20236g}, virtual reality (VR), and augmented reality (AR) \cite{chen2019towards}, is driving an unprecedented increase in data traffic. As multimedia technologies continue to mature, the demand for ubiquitous high-quality communication services has increased, resulting in a significant increase in the volume of data that must be transmitted. This increased data traffic has led to major challenges in wireless communication systems, especially in terms of resource scarcity and spectrum constraints. Addressing these challenges has become crucial to ensuring that future communication systems can meet user expectations. There are two main directions to overcome these challenges, which are improving resource utilization efficiency and reducing overall traffic. One efficient approach to enhance spectrum efficiency is non-orthogonal multiple access (NOMA), which allows multiple users to share the same resource block by allocating different power levels. This approach utilizes superposition coding at the transmitter and successive interference cancellation (SIC) at the receiver to improve spectrum efficiency \cite{ding2017survey, fang2021energy}. On the other hand, semantic communications, which focus on conveying the intended meaning of information instead of transmitting raw data, have garnered considerable attention for reducing the amount of data and improving transmission  efficiency \cite{tong2022nine, yang2022semantic, luo2022semantic}. Recent achievements in deep learning have further empowered semantic communications, which enable the efficient processing of diverse data types, such as text, speech, images, and videos \cite{huang2024flag, weng2023deep, liang2023selection, zhang2023deep}. As a result, it is natural to explore the integration of semantic communications with NOMA networks motivated by these advantages.

\vspace{-0.3cm}
\subsection{Related Works}
Shannon and Weaver first introduced the concept of semantic communications in  1949 \cite{shannon_mathematical_1949}. After that, research on semantic communications has continued to progress steadily, such as the concept of semantic web \cite{lassila2001semantic} and a novel framework of semantic communications \cite{choi2022unified}. With the rapid development in artificial intelligence and machine learning in recent years, semantic communications have entered a new era. Many studies focus on improving performance, particularly in terms of semantic similarity or semantic accuracy, across various data types, including text, speech, images, and videos. The authors in \cite{xie2021deep} proposed a deep semantic communication (DeepSC) transceiver for text transmission, which outperforms conventional schemes. The authors of \cite{huang2024flag} extended DeepSC to a multi-user scenario for text data transmission. Then, the research shifted to various data types. For example, the authors of \cite{weng2023deep} presented a deep learning-enabled semantic communication system that converts speech into text-related semantic features, significantly reducing data requirements while maintaining high performance. The authors of \cite{liang2023selection} proposed an end-to-end semantic communication system for efficient image transmission by implementing a deep learning-based classifier at the sender and a diffusion model at the receiver. The authors of \cite{zhang2023deep} extended the method in \cite{liang2023selection} to transmit videos by converting videos into frames. Besides the above works that primarily focus on enhancing transmission performance, some works investigated semantic communications from a task-oriented perspective. For example, the authors of \cite{xie2021task} established a multi-user semantic communication system called MU-DeepSC, which leverages correlated image and text data for the visual question answering (VQA) task. The authors in \cite{lyu2024semantic} developed a semantic communication system based on deep learning that simultaneously performs image recovery and classification tasks by integrating JSCC. The task of action recognition and semantic segmentation was accomplished by semantic communication based on deep Over-the-Air computation in \cite{wei2025deepair}.\par
The aforementioned studies focus primarily on optimizing semantic communications at the AI level. Specifically, they emphasize designing innovative deep learning based encoders and decoders. However, as most semantic communication devices operate in conjunction with wireless networks, recent research has increasingly focused on wireless resource-aware optimization. In particular, efforts have been made to enhance the performance of semantic communication systems by incorporating advanced wireless techniques, such as NOMA and reconfigurable intelligent surfaces (RIS), as well as by developing efficient resource allocation strategies. For instance, the authors in \cite{li2023non} studied a downlink NOMA scenario where a base station (BS) simultaneously served multiple semantic users. In addition, several studies, including \cite{wang2024privacy,huang2024joint,xie2024infor}, have explored the integration of RIS into semantic communication systems. However, conventional bit-level communications still dominate the wireless communication field at the current stage. First, many Internet of Things (IoT) devices are highly resource-constrained in terms of computation power, memory, and energy consumption, making them incapable of deploying or running advanced AI models required for semantic communication. Second, current 5G networks and their corresponding communication protocols are primarily designed based on bit-level transmission frameworks. As a result, it is unrealistic to completely replace all bit-based devices with semantic devices. Hence, it is more worthwhile to investigate the practical scenario where semantic users and bit-based users co-exist. The authors in \cite{mu2023exploiting} first explored the scenario where one semantic user and one bit-based user coexist. Subsequently, a multi-user scenario where semantic communications and bit-level communications coexist was investigated in \cite{xia2024resource}. However, the system model in \cite{xia2024resource} was limited to a single-input single-output (SISO) configuration. The semantic and bit-level coexisting networks with multiple-input-single-output (MISO) and multiple-input-multiple-output (MIMO) configurations were investigated in \cite{zhang2025beamforming} and  \cite{feng2024harmonizing}, respectively. Nevertheless, NOMA was not considered in any of these works. The concept of employing NOMA to simultaneously serve one semantic user and one bit-based user was proposed in \cite{mu2023exploiting, ji2024toward}, and a semi-NOMA scheme tailored for this setting was further developed in \cite{mu2023semi}.
\begin{table}[t]
\centering
\caption{Comparisons with relevant literature.}
\label{related_work}
\setlength{\tabcolsep}{6pt}
\renewcommand{\arraystretch}{1.2}
\begin{tabular}{|c|c|c|c|c|c|}
\hline
\multirow{2}{*}{Ref.} 
& \multicolumn{3}{c|}{Scenarios} 
& \multicolumn{2}{c|}{Perspectives} \\
\cline{2-6}
& Multi-user 
& Multi-antenna 
& NOMA
& AI 
& Wireless \\
\hline
\cite{mu2023exploiting} & \xmark & \xmark & \cmark & \xmark & \cmark   \\
\hline
\cite{xia2024resource} & \cmark & \xmark & \xmark & \xmark & \cmark   \\
\hline
\cite{zhang2025beamforming} & \cmark & \cmark & \xmark & \cmark & \cmark  \\
\hline
\cite{feng2024harmonizing} & \cmark & \cmark & \xmark & \xmark & \cmark  \\
\hline
\cite{ji2024toward} & \xmark & \xmark & \cmark & \xmark & \cmark  \\
\hline
\cite{mu2023semi} & \xmark & \xmark & \cmark & \xmark & \cmark  \\
\hline
\textbf{Our work} & \cmark & \cmark & \cmark & \cmark & \cmark  \\
\hline
\end{tabular}
\end{table}
\subsection{Motivations and Contributions}
Most of the existing literature on semantic-bit coexisting networks has overlooked the role of AI in semantic communications, focusing solely on the wireless communication perspective. For example, \cite{mu2023exploiting, mu2023semi, xia2024resource, ji2024toward, feng2024harmonizing} treated the semantic transceiver as a pre-trained black box without considering how to design and train a semantic transceiver and only focused on the physical-layer design. To fill this research gap, this paper includes the design and training of a wireless adaptive-DeepSC transceiver. Moreover, existing research on introducing NOMA into semantic-bit coexisting networks is largely limited to the two-user case with a SISO configuration \cite{mu2023exploiting,ji2024toward,mu2023semi}. While these setups offer useful theoretical insights, they fall short of capturing the complexity of real-world communication systems, where a BS typically serves multiple users simultaneously and is equipped with multiple antennas to exploit spatial degrees of freedom. Motivated by this, this paper considers a more challenging scenario in which the BS is equipped with multiple antennas, and multiple semantic users and bit-based users coexist in the network. The comparisons with some relevant literature are presented in Table \ref{related_work}. \par
The main contributions of this paper are summarized as follows:
\begin{itemize}

    \item We enhance the DeepSC semantic transceiver by integrating a CSI-aware module to improve its adaptability under diverse signal-to-noise (SNR) conditions. This design eliminates the need for repeated retraining across varying channels. Furthermore, we develop a mathematical model to characterize the impact of bandwidth and SNR on the performance of the semantic transceiver. This model consists of two components: word rate and BLEU score. The word rate measures how many words can be transmitted per second, reflecting the transmission efficiency of the semantic transceiver. The BLEU score measures the similarity between the transmitted and received sentences, indicating the semantic accuracy of the transceiver. This enables semantic performance to be seamlessly integrated into wireless optimization problems.
    
    \item We propose a multi-cluster hybrid NOMA (H-NOMA) framework in which each cluster consists of one semantic user and one bit-based user, reflecting the coexistence of different communication paradigms in future networks. NOMA is employed within each cluster, whereas orthogonal multiple access (OMA) is used between clusters. A novel two-period transmission protocol is proposed, consisting of a NOMA period and an exclusive period. During the NOMA period, semantic users and bit-based users are served simultaneously. In the exclusive period, only bit-based users are served.
    
    \item We formulate a total transmit power minimization problem that jointly optimizes beamforming, bandwidth allocation, and the semantic symbol factor. Closed-form beamforming solutions are derived for both periods. The bandwidth allocation for the NOMA period is addressed via a block coordinate descent (BCD)-based algorithm, while the bandwidth allocation for the exclusive period is obtained by solving a convex optimization problem. Finally, an exhaustive search is employed to determine the optimal semantic symbol factor.

    \item Simulation results indicate that semantic communications are more robust than bit-level communications under low SNR conditions. In particular, semantic communications can meet target performance by consuming significantly less transmit power. The proposed joint optimization framework requires less power than the benchmark schemes. It also shows good scalability when the number of antennas or clusters increases, making it suitable for large wireless networks in the future.
\end{itemize}

\subsection{Organization and Notation}
The rest of the paper is organized as follows. In Section II, the architecture of the semantic transceiver is introduced, and a mathematical model is developed. Section III presents the system model of the multi-cluster H-NOMA network and formulates a total transmit power minimization problem. In Section IV, a joint optimization framework is proposed. Simulation results are provided in Section V. Finally, Section VI concludes the paper and discusses potential future research directions.\par
\textit{Notations}: $\mathbf{X}$, $\mathbf{x}$, and $x$ represent a matrix, a vector, and a scalar, respectively. $\mathbf{x}^H$ denotes the Hermitian (conjugate transpose) of vector $\mathbf{x}$. $\mathbb{C}^{N \times 1}$ and $\mathbb{C}^{N \times M}$ denote the sets of $N \times 1$ and $N \times M$ complex-valued vectors and matrices, respectively. $||\cdot||$ denotes the Euclidean ($l_2$) norm. $\operatorname{Tr}(\cdot)$ represents the trace of a matrix, and $\operatorname{rank}(\cdot)$ denotes the matrix rank. $\mathbf{I}_N$ is the $N \times N$ identity matrix. The notation $\mathbf{A} \succeq 0$ indicates that matrix $\mathbf{A}$ is positive semidefinite.

\section{Semantic Transceiver and Performance Mathematical Model Design}
In this section, we first present the design of the semantic transceiver investigated in this study, along with the training methodology used. Subsequently, we build a mathematical model to characterize the performance of the semantic transceiver with SNR based on a series of empirical experiments.
\subsection{Semantic Transceiver Design}
 \begin{figure}[h]
     \centering
     \includegraphics[width=0.47\textwidth]{./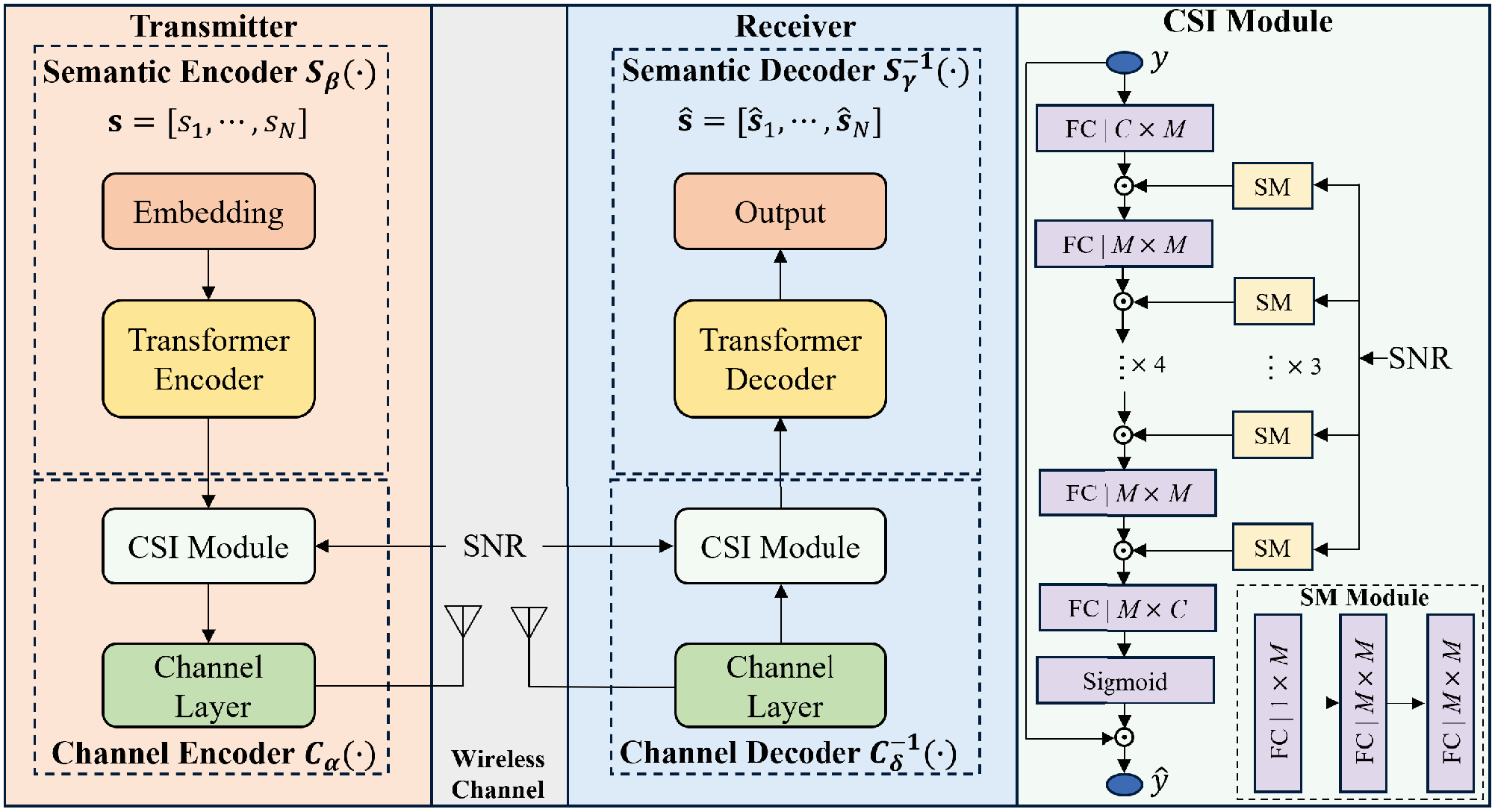}
     \caption{The architecture of semantic transceiver}
     \label{architecture}
\end{figure}
The architecture of the semantic transceiver investigated in this paper is shown in Fig. \ref{architecture}, which is built on the DeepSC proposed in \cite{9398576}. Inspired by \cite{yang2023witt}, we introduce the CSI module into DeepSC to track various channel states and adapt to wireless transmission. In particular, this CSI module makes the semantic transceiver adaptive to different SNR conditions without retraining the entire transceiver. The architecture of CSI module is illustrated in Fig. \ref{architecture}, which consists of 8 fully connected (FC) layers and 3 SNR modules (SM). $C$ denotes the dimension of input and output and $M$ denotes the dimension of hidden embeddings, respectively. SM is implemented as a three-layer FC network, which transforms a one-dimensional scalar SNR to an $M$-dimensional vector. This encoded SNR vector is then applied to the feature vectors through element-wise product in a sequential manner. The activation function ReLU is applied to the output of each FC layer, while a Sigmoid activation function is applied at the final layer. \par

\begin{algorithm}[t]
    \caption{Semantic Transceiver Training and Evaluation}\label{Alg}
    \begin{algorithmic}[1] 
        \STATE {{\bf Training:} Initialize $S_{\beta}(\cdot)$, $S_{\gamma}^{-1}(\cdot)$, $C_{\alpha}(\cdot)$, and $C_{\delta}^{-1}(\cdot)$.}
        \FOR { episode $i = 1,2,...,I$ }
            \FOR{step $t = 1,2,\cdots, T$}
                \STATE {Sample a minibatch (batch size = 64) of data from training dataset.}
                \STATE {Calculate the gradients $\nabla \beta$,$\nabla \gamma$,$\nabla \alpha$, and $\nabla \delta$ through $\mathcal{L}$ and backpropagation.}
                \STATE {Update $S_{\beta}(\cdot)$, $S_{\gamma}^{-1}(\cdot)$, $C_{\alpha}(\cdot)$, and $C_{\delta}^{-1}(\cdot)$ by $\beta^{(t+1)} = \beta^{(t)} - \eta \nabla\beta^{(t)}$, $\gamma^{(t+1)} = \gamma^{(t)} - \eta \nabla\gamma^{(t)}$, $\alpha^{(t+1)} = \alpha^{(t)} - \eta \nabla\alpha^{(t)}$, and $\delta^{(t+1)} = \delta^{(t)} - \eta \nabla\delta^{(t)}$, respectively, with learning rate $\eta$.}
            \ENDFOR
        \ENDFOR
    \end{algorithmic}
     \begin{algorithmic}[1]
        \STATE {{\bf Evaluation:} Freeze $S_{\beta}(\cdot)$, $S_{\gamma}^{-1}(\cdot)$, $C_{\alpha}(\cdot)$, and $C_{\delta}^{-1}(\cdot)$.}
        \FOR { $n = 1,2,...,N$ }
            \STATE {Select a sentence sample denoted by $\mathbf{s}_n$ from test dataset.}
            \STATE {Obtain the recovered sentence $\hat{\mathbf{s}}_n$ through $S_{\beta}(\cdot) \to C_{\alpha}(\cdot) \to$ channel $\to C_{\delta}^{-1}(\cdot) \to S_{\gamma}^{-1}(\cdot)$.}
            \STATE {Calculate 1-gram BLEU score between $\mathbf{s}_n$ and $\hat{\mathbf{s}}_n$.}
        \ENDFOR
        \STATE {Average 1-gram BLEU score over $N$ sentence samples.}
    \end{algorithmic}
\end{algorithm}
Let $\mathbf{s} = [s_1, s_2, \cdots, s_N]$ denote an $N$-word sentence, where $s_i$ denotes the $i$-th word in this sentence. The semantic encoder $S_{\beta}(\cdot)$ with parameter set $\beta$ consists of two parts: the embedding module and the transformer encoder. The embedding module transforms each word to index embedding and then the transformer encoder extracts the semantic information from all index embeddings through a multi-head self-attention mechanism. The semantic information matrix $\mathbf{X} \in \mathbb{R}^{N \times d_s}$, where $d_s$ denotes the dimension of the index embedding, can be expressed as follows:
\begin{equation}
    \mathbf{X} = S_{\beta}(\mathbf{s}). \label{semantic encoder}
\end{equation}
\begin{figure}[t]
     \centering
     \includegraphics[width=0.47\textwidth]{./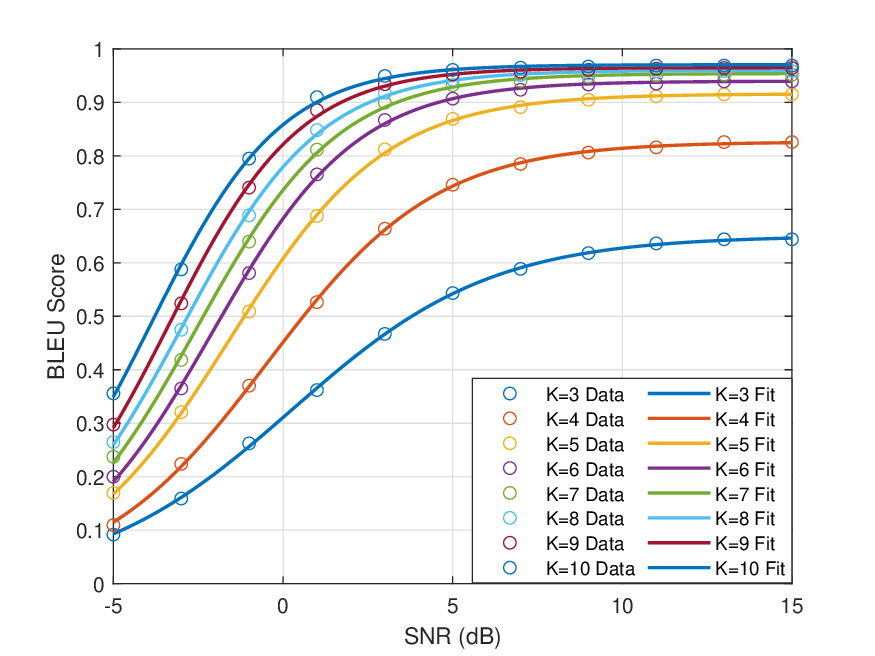}
     \caption{Data regression to fit performance data under different $K$ and SNR.}
     \label{data regression}
\end{figure}
Then, the channel encoder $C_{\alpha}(\cdot)$ with parameter set $\alpha$ further transforms $\mathbf{X}$ into symbols $\mathbf{S} \in \mathbb{C}^{N \times K}$ that can be transmitted over wireless channel as
\begin{equation}
    \mathbf{S} = C_{\alpha} (\mathbf{X}), \label{channel encoder}
\end{equation}
where $K$ is the semantic symbol factor, representing the number of symbols per word. It is worth pointing out that the channel layer in the channel encoder is a FC layer with an output dimension of $2K$. In practice, the value of $K$ can be adjusted to control the number of symbols transmitted over the wireless channel by changing the output dimension of channel layer. The channel decoder $C_{\delta}^{-1}(\cdot)$ and semantic decoder $S_{\gamma}^{-1} (\cdot)$ are the inverse process of $C_{\alpha} (\cdot)$ and $S_{\beta}(\cdot)$, respectively. The recovered sentence $\hat{\mathbf{s}} =  [\hat{s}_1, \hat{s}_2, \cdots, \hat{s}_N] $ can be expressed as
\begin{equation}
    \hat{\mathbf{s}} = S_{\gamma}^{-1} \left(C_{\delta}^{-1}(\hat{\mathbf{S}})\right), \label{recover sentence}
\end{equation}
where $\hat{\mathbf{S}}$ denotes the received symbol matrix from wireless channels. In order to maximize the semantic similarity between the original sentence and the recovered sentence, the loss function is defined as the cross entropy of $\mathbf{s}$ and $\hat{\mathbf{s}}$, which is given by
\begin{equation}
    \mathcal{L} = - \sum\limits_{i=1}^N \left(q(s_i) \log p(\hat{s}_i)\right), \label{loss function of transceiver}
\end{equation}
where $q(s_i)$ is the one-hot representation of $s_i \in \mathbf{s}$, and $p(\hat{s}_i)$ is the predicted probability distribution of the $i$-th word. \par
It is crucial to pre-train the transmitter and receiver before deployment. In particular, we need to train the transceiver to obtain $\beta$, $\alpha$, $\delta$, and $\gamma$. In this paper, we train the semantic transceiver\footnote{The training code is built on the code associated with \cite{wang2023knowledge}.} on WebNLG English dataset\footnote{WebNLG dataset is available at: https://gitlab.com/shimorina/webnlg-dataset/-/tree/master.}. Once the transceiver is trained, we calculate the average 1-gram BLEU score over $N$ sentence samples from the test dataset to evaluate the performance. The training and evaluation process is summarized in Algorithm \ref{Alg}.
\subsection{Semantic Transceiver Performance Mathematical Model}
\begin{table}[t]
\centering
\caption{Logistic fitting parameters}
\begin{tabular}{c|c|c|c}
\hline
$K$ & $A_K$& $l_K$& $x_{0,K}$\\
\hline
3  & 0.650 & 0.340 & 0.262 \\
4  & 0.826 & 0.402 & -0.462 \\
5  & 0.916 & 0.435 & -1.561 \\
6  & 0.940 & 0.469 & -2.084 \\
7  & 0.954 & 0.477 & -2.553 \\
8  & 0.960 & 0.491 & -2.979 \\
9  & 0.965 & 0.516 & -3.379 \\
10 & 0.970 & 0.522 & -3.897 \\
\hline
\end{tabular}\label{parameter}
\end{table}
In semantic communications, symbols convey semantic information rather than bits. Therefore, the Shannon capacity equation used in conventional communications is not suitable as a performance criterion for semantic communications. In order to maximize semantic transceiver performance by optimizing resource allocation over wireless transmission, it is essential to establish a mathematical function that characterizes the relationship between transceiver performance and wireless resources such as bandwidth and transmit power. According to conventional communication transceivers, transmit efficiency and accuracy are two fundamental metrics to assess performance. In this case, we can characterize the performance of a semantic transceiver from these two aspects as well. First, we define word rate to describe transmit efficiency of a text-based semantic transceiver, which can be expressed as
\begin{equation}
    \mathcal{S} = \frac{B}{K}, \label{word rate}
\end{equation}
where $B$ denotes the transmit bandwidth as well as the symbol rate. The word rate measures how many words can be transmitted per second. Then, the transmit accuracy of a text-based semantic transceiver can be measured by 1-gram BLEU score.  However, there is no specific function to bridge BLEU and wireless resources, which is challenging to improve performance by optimizing wireless resource allocation. To formulate a closed-form function of BLEU, the data regression method proposed in \cite{mu2023semi} is utilized to fit performance data. We first train 8 transceiver models with $K$ from $3$ to $10$ and then evaluate each model under different SNR conditions. After that, we use the standard logistic function to fit performance data and the approximated BLEU function is given by
\begin{equation}
    \epsilon_K(x) = \frac{A_K}{1 + e^{-l_K (x - x_{0,K})}}, \label{logistic function}
\end{equation}
where $A_K, l_K$ and $x_{0,K}$ are three parameters related to $K$. $\epsilon_K(x)$ represents the BLEU score of $K$ symbol-based semantic transceiver under SNR condition $x$. Performance data and fitted curves are illustrated in Fig. \ref{data regression} and values of fitting parameters are provided in Table \ref{parameter}. The performance of the proposed semantic transceiver can be mathematically characterized by equations \eqref{word rate} and \eqref{logistic function}.

\section{System Model}
\begin{figure}[t]
     \centering
     \includegraphics[width=0.47\textwidth]{./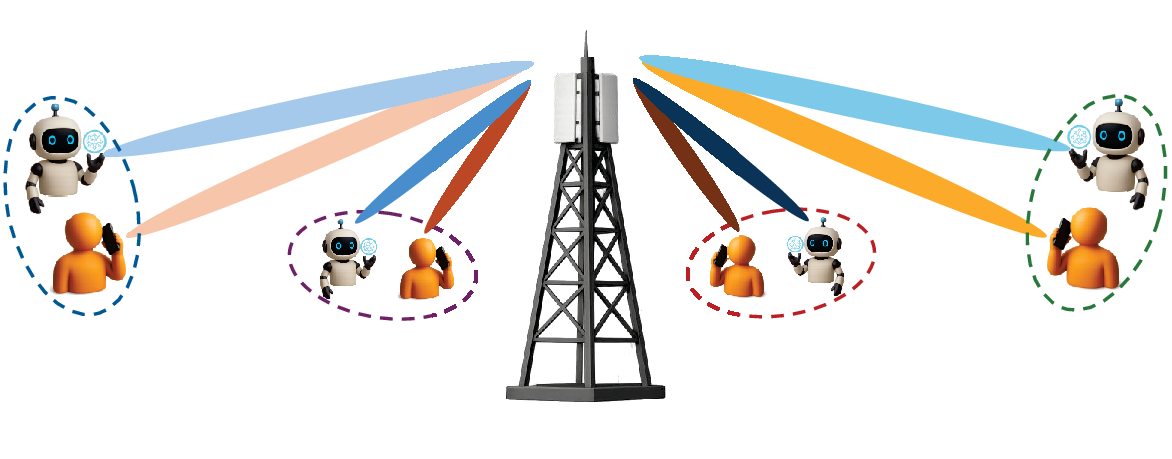}
     \caption{The system model.}
     \label{system}
\end{figure}
The system model is shown in Fig. \ref{system}, which consists of a BS, $M$ semantic users (S-user) and $M$ bit-based users (B-user).  S-users and B-users receive signals in a semantic communication manner and a conventional bit-level communication manner, respectively.  It is assumed that the BS is equipped with $N$ antennas and all semantic and bit-based users are equipped with a single antenna. The BS assigns a dedicated beam to each user. According to \cite{10912507}, H-NOMA can reduce SIC complexity by grouping users into small clusters. Therefore, we allocate users into $M$ clusters and each cluster consists of only one S-user and one B-user. Orthogonal multiple access (OMA) is employed across clusters to eliminate inter-cluster interference by assigning each cluster a distinct bandwidth. Within each cluster, NOMA is adopted, where the S-user and B-user share the same spectrum and apply SIC to remove mutual interference \footnote{Performing SIC between two semantic users is particularly challenging because their semantic transceivers must be jointly pre-trained with specific consideration for interference cancellation. Without such tailored pre-training, one semantic user cannot decode another’s signal. This differs fundamentally from SIC in conventional bit-based systems, where signal structures are standardized. As semantic-level SIC remains an open research problem, the case where two S-users are allocated to the same cluster beyond the scope of this paper and will be investigated in our future work.}. According to \cite{mu2023semi}, the transmitter and receiver for semantic transmission are pre-trained in advance; hence, B-users are impossible to decode the received semantic symbols. In contrast, S-users can employ a separate source and channel coding (SSCC) decoder to decode and recover bit symbols. As a result, a fixed decoding order is applied in each cluster, which is that the S-user removes interference caused by the B-user by SIC before decoding semantic information, while the B-user directly decodes bit information. The intra-cluster transmission protocol is illustrated in Fig. \ref{system model}. The information of the S-user is encoded to semantic symbols by the semantic encoder, while the information of the B-user is encoded to bit symbols by the SSCC encoder. Then, the semantic and bit symbols are superposed to be transmitted through RF chains. Inspired by \cite{zhang2025beamforming}, the entire transmission period is split into three sub-periods, namely pilot period, NOMA period and exclusive period. The pilot period lasts $L_p$ symbol intervals used for channel estimation. The NOMA period lasts $L_{no}$ symbol intervals, during which S-users and B-users are served simultaneously. The exclusive period lasts $L_{ex}$ symbol intervals, during which only B-users are served. In this paper, we focus on the NOMA period and the exclusive period.

 \begin{figure}[t]
     \centering
     \includegraphics[width=0.47\textwidth]{./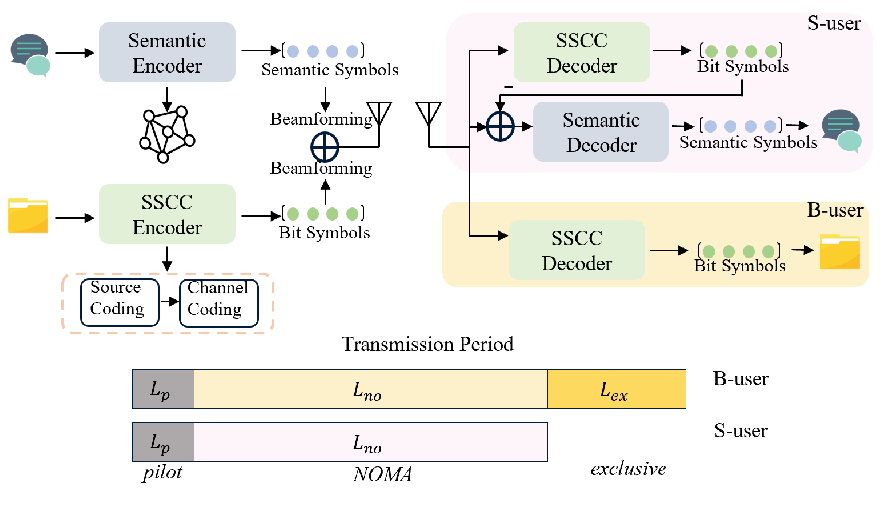}
     \caption{The transmission protocol.}
     \label{system model}
\end{figure}


\subsection{NOMA Period}
Let ${\rm U}_{s,i}$ and ${\rm U}_{b,i}$ denote the S-user and B-user in the $i-$th cluster, where $i = \{1,\cdots, M\}$. Since there is no inter-cluster interference, we can consider each cluster independently. Let the $i$-th cluster be an example.  In this period,  ${\rm U}_{s,i}$ and  ${\rm U}_{b,i}$ are served simultaneously. The transmitted signal to the $i-$th cluster at the BS can be expressed as
\begin{equation}
    \mathbf{x}^{no}_i = \mathbf{w}^{no}_{s,i} x_{s,i} + \mathbf{w}^{no}_{b,i} x_{b,i},
\end{equation}
where $\mathbf{w}^{no}_{s,i} \in \mathbb{C}^{N \times 1}$ and $\mathbf{w}^{no}_{b,i} \in \mathbb{C}^{N \times 1}$ denote the beamforming vectors for ${\rm U}_{s,i}$ and  ${\rm U}_{b,i}$, respectively. $x_{s,i}$ and $x_{b,i}$ denote the semantic symbol and bit symbol with unit power, satisfying $\mathbb{E}\left( |x_{s,i}|^2\right) = \mathbb{E}\left( |x_{b,i}|^2\right) = 1$. The superscript $no$ means that the current period is the NOMA period. Then, the received signal at ${\rm U}_{s,i}$ and  ${\rm U}_{b,i}$ can be expressed as
\begin{equation}
    y_{s,i}^{no} = \mathbf{h}_{s,i}^H \mathbf{x}^{no}_i + n_{s,i}, 
\end{equation}
and
\begin{equation}
   y_{b,i}^{no} = \mathbf{h}_{b,i}^H \mathbf{x}^{no}_i + n_{b,i}. 
\end{equation}
$\mathbf{h}_{s,i}$ and $\mathbf{h}_{b,i}$ denote the channel vectors on the BS-${\rm U}_{s,i}$ link and the BS--${\rm U}_{b,i}$ link, respectively. $n_{s,i}$ and $n_{b,i}$ denote additive white Gaussian noise (AWGN) with power spectral density $N_0$. Following the SIC decoding order, the ${\rm U}_{b,i}$'s signal is decoded at $\mathbf{\rm U}_{s,i}$ first. The SINR of ${\rm U}_{b,i}$'s signal decoded at $\mathbf{\rm U}_{s,i}$ can be expressed as
\begin{equation}
    \Gamma_{b \to s, i}^{no} = \frac{|\mathbf{h}_{s,i}^H \mathbf{w}_{b,i}^{no}|^2}{|\mathbf{h}_{s,i}^H \mathbf{w}_{s,i}^{no}|^2 + B_i^{no}N_0}, \label{SINR b to s}
\end{equation}
where $B_i^{no}$ denotes the bandwidth assigned to the $i$-th cluster. After successfully decoding ${\rm U}_{b,i}$'s signal, ${\rm U}_{s,i}$ removes the interference from  ${\rm U}_{b,i}$ and decodes its own signal. The SNR of ${\rm U}_{s,i}$ decoded by itself can be expressed as
\begin{equation}
    \Gamma_{s, i}^{no} = \frac{|\mathbf{h}_{s,i}^H \mathbf{w}_{s,i}^{no}|^2}{B_i^{no}N_0}. \label{SNR s}
\end{equation}
Since ${\rm U}_{b,i}$ directly decodes its own signal with the interference from  ${\rm U}_{s,i}$, the SINR of ${\rm U}_{b,i}$'s signal decoded by itself can be expressed as
\begin{equation}
    \Gamma_{b, i}^{no} = \frac{|\mathbf{h}_{b,i}^H \mathbf{w}_{b,i}^{no}|^2}{|\mathbf{h}_{b,i}^H \mathbf{w}_{s,i}^{no}|^2 + B_i^{no}N_0}. \label{SINR b}
\end{equation}
The achievable data rate of ${\rm U}_{b,i}$ is given by
\begin{equation}
    R_{b,i}^{no} = \min \left(B_i^{no} \log_2 (1 + \Gamma_{b \to s,i}^{no}),  B_i^{no} \log_2 (1 + \Gamma_{b,i}^{no})\right), \label{achievable data rate}
\end{equation}
According to \eqref{word rate} and \eqref{logistic function}, the word rate and transmit accuracy of ${\rm U}_{s,i}$ is $\frac{B_i^{no}}{K}$ and $\epsilon_K \left( \Gamma_{s, i}^{no}\right)$, respectively.

\subsection{Exclusive Period}
In this period, ${\rm U}_{s,i}$ is idle because its transmission has completed. In contrast, ${\rm U}_{b,i}$ continues to receive signals. Since ${\rm U}_{s,i}$ no longer causes interference to ${\rm U}_{b,i}$, the beamforming vector $\mathbf{w}_{b,i}^{no}$ and allocated bandwidth $B_i^{no}$ in the NOMA period may no longer be optimal for ${\rm U}_{b,i}$ in the exclusive period. As a result, we design a new beamforming vector $\mathbf{w}_{b,i}^{ex}$ and re-allocate bandwidth $B_i^{ex}$ for ${\rm U}_{b,i}$ in this period. The superscript $ex$ means that the current period is the exclusive period. The transmitted signal to the $i-$th cluster becomes
\begin{equation}
    \mathbf{x}_i^{ex} = \mathbf{w}^{ex}_{b,i} x_{b,i}. \label{ex signle}
\end{equation}
We assume that the channel experiences large-scale fading, which varies slowly over time. Consequently, the channel is considered to remain constant during the two transmission periods. Then, the received signal at ${\rm U}_{b,i}$ can be expressed as
\begin{equation}
    y_{b,i}^{ex} = \mathbf{h}_{b,i}^H \mathbf{x}_i^{ex} + n_{b,i}. \label{ex received signal}
\end{equation}
The SNR when ${\rm U}_{b,i}$ decodes its signal is given by
\begin{equation}
    \Gamma_{b,i}^{ex} = \frac{|\mathbf{h}_{b,i}^H \mathbf{w}_{b,i}^{ex}|^2}{ B_i^{ex}N_0}. \label{ex SNR b}
\end{equation}
and the data rate of ${\rm U}_{b,i}$ in the exclusive period is 
\begin{equation}
    R_{b,i}^{ex} = B_i^{ex} \log_2 (1 + \Gamma_{b,i}^{ex}). \label{ex b data rate}
\end{equation}

\subsection{Problem Formulation}
To minimize the average transmit power over two periods, the first step is to determine the duration of each period. We assume that the file transmitted by the base station to each user is of equal size, containing $N_w$ words. Therefore, the parameter $L_{no}$ is uniform across all S-users if the same semantic transceiver is deployed. Recall that $K$ means how many symbols to represent a word; hence, we have
\begin{equation}
    L_{no} = N_w K .
\end{equation}
The value of $L_{ex}$ can be calculated by the coding and modulation scheme adopted by B-users\footnote{For example, if the B-user employs ASCII encoding, LDPC coding with a rate of $\frac{1}{2}$, and 64QAM modulation, the average number of characters per word is denoted by $N_c$. The required number of symbols in the exclusive period $L_{ex}$ is calculated as $L_{ex} = \left\lceil \frac{8N_wN_c}{3} \right\rceil$, where $\left\lceil \cdot \right\rceil$ denotes the ceiling (round-up) operation.}. Once the period duration is determined, the average transmit power can be calculated by
\begin{equation}
    p = \frac{L_{no}}{L_{no} + L_{ex}} p^{no} + \frac{L_{ex}}{L_{no} + L_{ex}} p^{ex}. \label{average power}
\end{equation}
$p^{no}$ denotes the total transmit power of the NOMA period, which is given by
\begin{equation}
    p^{no} = \sum\limits_{i=1}^M (||\mathbf{w}_{s,i}^{no}||^2 + ||\mathbf{w}_{b,i}^{no}||^2). \label{NOMA power}
\end{equation}
$p^{ex}$ denotes the total transmit power of the exclusive period, which is given by
\begin{equation}
    p^{ex} = \sum\limits_{i=1}^M ||\mathbf{w}_{b,i}^{ex}||^2. \label{exclusive power}
\end{equation}
Then, the total transmit power minimization problem can be formulated as follows:
\begin{subequations}\label{Prob0} 
\begin{align}
{\rm P_0}: \quad &\min_{\{\mathbf{W},\mathbf{b}, K\}} p \label{P00}\\
\text{s.t.} \quad &\frac{B_i^{no}}{K} \geq \mathcal{S}_0, \quad \forall i \label{P01}\\
\quad\quad &\epsilon_K \left(\Gamma_{s, i}^{no}\right) \geq \varepsilon_0, \quad \forall i \label{P02}\\
\quad\quad &R_{b,i}^{no} \geq R_0, \quad \forall i \label{P03}\\
\quad\quad &R_{b,i}^{ex} \geq R_0, \quad \forall i \label{P04}\\
\quad\quad &\sum\limits_{i=1}^M B_i^{no} \leq B_0 \label{P05}\\
\quad\quad &\sum\limits_{i=1}^M B_i^{ex} \leq B_0 \label{P06}\\
\quad\quad &K \geq 1 \label{P07}
\end{align}
\end{subequations}
where $\mathbf{W}$ denotes a beamforming matrix collecting all beamforming vectors and $\mathbf{b}$ denotes the bandwidth vector collecting all the bandwidths allocated to each cluster. $\mathcal{S}_0$ and $\epsilon_0$ denote the target word rate and the target BLEU score for S-users, respectively, and $R_0$ denote the target data rate for B-users. Constraints \eqref{P01} and \eqref{P02} guarantee transmission efficiency and accuracy for S-users. Constraints \eqref{P03} and \eqref{P04} ensure the quality of service (QoS) for B-users during both the NOMA and exclusive periods. Constraints \eqref{P05} and \eqref{P06} ensure that the total allocated bandwidth does not exceed the available bandwidth $B_0$. Constraint \eqref{P07} ensures that each word is encoded into at least one symbol.

\section{Joint Optimization of Beamforming Design, Bandwidth Allocation and Semantic Symbol Factor Configuration}
In this section, we jointly optimize beamforming, bandwidth allocation, and semantic symbol factor $K$. Since these three optimization variables are interdependent, an efficient approach to solve ${\rm P}_0$ is to optimize one variable at a time while keeping the other two fixed. 

\subsection{Beamforming Design}
When solving beamforming, the bandwidth allocation and $K$ are assumed to be fixed. Therefore, once $B_i^{no}, \forall i$, $B_i^{ex}, \forall i$ and $K$ are given, the beamforming sub-problem can be expressed as 
\begin{subequations}\label{Prob1} 
\begin{align}
{\rm P_1}: \quad &\min_{\{\mathbf{W}\}} \; p \label{P10}\\
\text{s.t.} \quad &\epsilon_K \left(\Gamma_{s,i}^{no}\right) \geq \varepsilon_0, \quad \forall i \label{P11}\\
\quad\quad &R_{b,i}^{no} \geq R_0,\quad \forall i \label{P12}\\
\quad\quad &R_{b,i}^{ex} \geq R_0,\quad \forall i \label{P13}
\end{align}
\end{subequations}
Since $K$ is fixed, we can minimize $p^{no}$ and $p^{ex}$ individually to minimize $p$. Hence, we can further decompose ${\rm P}_1$ into two sub-problems based on two transmission periods.

\subsubsection{NOMA Period}
In this period, we focus only on transmission in the NOMA period. Then, the beamforming optimization problem is given by
\begin{subequations}\label{Prob2} 
\begin{align}
{\rm P_2}: \quad \min_{\{\mathbf{W}^{no}\}} &\sum\limits_{i=1}^M ||\mathbf{w}_{s,i}^{no}||^2 + ||\mathbf{w}_{b,i}^{no}||^2 \label{P20}\\
\text{s.t.} \quad & \eqref{P11}, \eqref{P12}, \notag
\end{align}
\end{subequations}
where $\mathbf{W}^{no}$ denotes the beamforming matrix collecting all beamforming vectors in the NOMA period. Since there is no interference between clusters, ${\rm P_2}$ can be decomposed into $M$ sub-problems based on different clusters. Without loss of generality, we consider the $i-$th cluster as an example. The sub-problem in the $i-$th cluster can be expressed as
\begin{subequations}\label{Prob3} 
\begin{align}
{\rm P_3}: \quad \min_{\{\mathbf{w}_{s,i}^{no},\mathbf{w}_{b,i}^{no}\}} &||\mathbf{w}_{s,i}^{no}||^2 + ||\mathbf{w}_{b,i}^{no}||^2 \label{P30}\\
\text{s.t.} \quad & R_{b,i}^{no} \geq R_0 \label{P31}  \\
\quad\quad &R_{b,i}^{ex} \geq R_0. \label{P32}
\end{align}
\end{subequations}
Let $p_i^{no*} = ||\mathbf{w}_{s,i}^{no*}||^2 + ||\mathbf{w}_{b,i}^{no*}||^2$ denote the optimal value of ${\rm P}_3$. Then, the optimal value of ${\rm P}_2$ can be expressed as
\begin{equation}
    p^{no*} = \sum_{i=1}^M p_i^{no*}. \label{optimal power in NOMA period}
\end{equation}
The next step is to solve ${\rm P_3}$. \par

Note that constraints \eqref{P31} and \eqref{P32} are non-convex, which make ${\rm P}_3$ difficult to be solved. Hence, it is necessary to transform them into convex constraints. To deal with constraint \eqref{P11}, we notice that the first order derivative of \eqref{logistic function} is given by
\begin{equation}
    \frac{d\epsilon_K(x)}{dx} = \frac{A_K l_K e^{-l_K(x - x_{0,K})}}{\left(1 + e^{-l_K(x - x_{0,K})}\right)^2}. \label{first order derivative}
\end{equation}
According to Table \ref{parameter}, $A_K$ and $l_k$ are positive; hence, $\frac{d\epsilon_K(x)}{dx} > 0 $ indicates that $\epsilon_K(x)$ is monotonically increasing with $x$. Therefore, $\exists \; \Gamma_{s,0} \in (0, A_k)$ satisfies
\begin{equation}
    \epsilon_K \left(\Gamma_{s,i}^{no}\right) \geq \epsilon_K \left(\Gamma_{s,0}\right) = \epsilon_0. \label{rewrite1}
\end{equation}
Since $\epsilon_K(x)$ is monotonically increasing with $x$, constraint \eqref{rewrite1} can be equivalently transformed to
\begin{equation}
    \Gamma_{s,i}^{no} \geq  \Gamma_{s,0}. \label{rewrite2}
\end{equation}
$\Gamma_{s,0}$ can be calculated via the inverse function of $\epsilon_K(x)$, which is given by
\begin{equation}
    \epsilon_K^{-1}(y) = x_{0,K} - \frac{1}{l_K}\ln\left(\frac{A_K}{y} - 1\right), y \in (0, A_K). \label{inverse function of logistic function}
\end{equation}
Then, $\Gamma_{s,0} = \epsilon_K^{-1} (\epsilon_0)$. Constraint \eqref{P31} has been recast into an SINR constraint \eqref{rewrite2}. As for constraint \eqref{P32}, it can be directly transformed to two SINR/SNR constraints, which are given by
\begin{equation}
    \Gamma_{b \to s,i}^{no} \geq  \Gamma_{b,0,i}^{no} \label{sinr constraint1}
\end{equation}
and
\begin{equation}
    \Gamma_{b,i}^{no} \geq  \Gamma_{b,0,i}^{no}, \label{sinr constraint2}
\end{equation}
where $\Gamma_{b,0,i}^{no} = 2^{\frac{R_0}{B_i^{no}}}-1$. Although constraints \eqref{P31} and \eqref{P32} have been transformed into SINR/SNR constraints, constraints \eqref{rewrite2} \eqref{sinr constraint1} and \eqref{sinr constraint2} are still non-convex. Note that these constraints all consist of quadratic form related to beamforming vectors. An efficient way to deal with quadratic form is semidefinite relaxation (SDR) \cite{luo2010semidefinite}. \par
In SDR, some auxiliary matrices $\mathbf{W}_{s,i}^{no} = \mathbf{w}_{s,i}^{no} \mathbf{w}_{s,i}^{no H}$, $\mathbf{H}_{s,i} = \mathbf{h}_{s,i}\mathbf{h}_{s,i}^{H}$, $\mathbf{W}_{b,i}^{no} = \mathbf{w}_{b,i}^{no} \mathbf{w}_{b,i}^{no H}$, and $\mathbf{H}_{b,i} = \mathbf{h}_{b,i}\mathbf{h}_{b,i}^{H}$ are introduced to replace all quadratic terms. For example, $||\mathbf{w}_{s,i}^{no}||^2$ is replaced by ${\rm Tr}(\mathbf{W}_{s,i}^{no})$ and $|\mathbf{h}_{s,i}^{H} \mathbf{w}_{s,i}^{no}|^2$ is replaced by ${\rm Tr}(\mathbf{H}_{s,i}\mathbf{W}_{s,i}^{no})$. By applying SDR and some algebraic transformations, constraints \eqref{rewrite2} \eqref{sinr constraint1} and \eqref{sinr constraint2} can be transformed to linear constraints. Hence, ${\rm P}_3$ is recast into
\begin{subequations}\label{Prob4} 
\begin{align}
{\rm P_4}: \quad &\min_{\{\mathbf{W}_{s,i}^{no}, \mathbf{W}_{b,i}^{no}\}}  {\rm Tr}(\mathbf{W}_{s,i}^{no}) + {\rm Tr}(\mathbf{W}_{b,i}^{no})\label{P40}\\
\text{s.t.} \quad &  {\rm Tr}(\mathbf{H}_{s,i}\mathbf{W}_{s,i}^{no}) \geq N_i^{no} \Gamma_{s,0} \label{P41} \\
 \quad &  {\rm Tr}(\mathbf{H}_{s,i}\mathbf{W}_{b,i}^{no}) \geq \Gamma_{b,0,i}^{no} {\rm Tr}(\mathbf{H}_{s,i}\mathbf{W}_{s,i}^{no}) + N_i^{no} \Gamma_{b,0,i}^{no} \label{P42} \\
 \quad &  {\rm Tr}(\mathbf{H}_{b,i}\mathbf{W}_{b,i}^{no}) \geq \Gamma_{b,0,i}^{no} {\rm Tr}(\mathbf{H}_{b,i}\mathbf{W}_{s,i}^{no}) + N_i^{no} \Gamma_{b,0,i}^{no} \label{P43} \\
 \quad & \mathbf{W}_{s,i}^{no} \succeq 0 \label{P44} \\
 \quad & \mathbf{W}_{b,i}^{no} \succeq 0 \label{P45} \\
 \quad & {\rm rank}(\mathbf{W}_{s,i}^{no}) = 1 \label{P46} \\
 \quad & {\rm rank}(\mathbf{W}_{b,i}^{no}) = 1, \label{P47}
\end{align}
\end{subequations}
where $N_i^{no} = B_i^{no} N_0$ denotes the noise power of the $i$-th cluster in the NOMA period. Constraints \eqref{P46} and \eqref{P47} arise from $\mathbf{W}_{s,i}^{no} = \mathbf{w}_{s,i}^{no} \mathbf{w}_{s,i}^{no H}$ and $\mathbf{W}_{b,i}^{no} = \mathbf{w}_{b,i}^{no} \mathbf{w}_{b,i}^{no H}$. With \eqref{P46} and \eqref{P47}, $\mathbf{w}_{s,i}^{no}$ and $\mathbf{w}_{b,i}^{no}$ can be reconstructed from $\mathbf{W}_{s,i}^{no}$ and $\mathbf{W}_{b,i}^{no}$, respectively. However, \eqref{P46} and \eqref{P47} are non-convex constraints. To make ${\rm P}_4$ tractable, rank constraints are usually ignored when solving ${\rm P}_4$. Without rank constraints, ${\rm P}_4$ becomes a convex problem, which is given by
\begin{subequations}\label{Prob5} 
\begin{align}
{\rm P_5}: \quad &\min_{\{\mathbf{W}_{s,i}^{no}, \mathbf{W}_{b,i}^{no}\}}  {\rm Tr}(\mathbf{W}_{s,i}^{no}) + {\rm Tr}(\mathbf{W}_{b,i}^{no})\label{P50}\\
\text{s.t.}  & \qquad \quad  \eqref{P41},\eqref{P42},\eqref{P43},\eqref{P44},\eqref{P45}. \notag
\end{align}
\end{subequations}
Since ${\rm P_5}$ is a convex problem, it can be solved by CVX solvers. Let $\mathbf{W}_{s,i}^{no*}$ and $\mathbf{W}_{b,i}^{no*}$ denote the optimal solution of ${\rm P_5}$. If ${\rm rank}(\mathbf{W}_{s,i}^{no*}) = 1$ and ${\rm rank}(\mathbf{W}_{b,i}^{no*}) = 1$, the optimal solutions $\mathbf{w}_{s,i}^{no*}$ and $\mathbf{w}_{b,i}^{no*}$ of ${\rm P_3}$ can be directly recovered from $\mathbf{W}_{s,i}^{no*}$ and $\mathbf{W}_{b,i}^{no*}$ through matrix decomposition, respectively. Otherwise, $\mathbf{w}_{s,i}^{no*}$ and $\mathbf{w}_{b,i}^{no*}$ have to be approximated via Gaussian randomization.
\begin{prop} \label{prop1}
    The rank of $\mathbf{W}_{s,i}^{no*}$ and $\mathbf{W}_{b,i}^{no*}$ can be guaranteed to be 1.
\end{prop}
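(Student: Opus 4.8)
The plan is to prove SDR tightness through the Karush--Kuhn--Tucker (KKT) conditions of the convex program ${\rm P}_5$, analyzing $\mathbf{W}_{s,i}^{no}$ and $\mathbf{W}_{b,i}^{no}$ in turn. First I would attach multipliers $\lambda_1,\lambda_2,\lambda_3\ge 0$ to the linear constraints \eqref{P41}, \eqref{P42}, \eqref{P43} and positive-semidefinite dual matrices $\mathbf{Y}_{s,i},\mathbf{Y}_{b,i}\succeq 0$ to \eqref{P44}, \eqref{P45}. Since ${\rm P}_5$ is convex and Slater's condition holds whenever the QoS targets are feasible, the KKT system characterizes the optimum. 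Differentiating the Lagrangian with respect to each matrix variable and setting the result to zero gives the stationarity identities
\begin{align}
\mathbf{Y}_{s,i} &= \mathbf{I}_N - (\lambda_1 - \lambda_2\Gamma_{b,0,i}^{no})\mathbf{H}_{s,i} + \lambda_3\Gamma_{b,0,i}^{no}\mathbf{H}_{b,i},\\
\mathbf{Y}_{b,i} &= \mathbf{I}_N - \lambda_2\mathbf{H}_{s,i} - \lambda_3\mathbf{H}_{b,i},
\end{align}
together with the complementary slackness $\mathbf{Y}_{s,i}\mathbf{W}_{s,i}^{no*}=\mathbf{0}$ and $\mathbf{Y}_{b,i}\mathbf{W}_{b,i}^{no*}=\mathbf{0}$. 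These force the range of each optimal $\mathbf{W}$ into the null space of its dual matrix, so it suffices to lower-bound ${\rm rank}(\mathbf{Y}_{s,i})$ and ${\rm rank}(\mathbf{Y}_{b,i})$.

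For $\mathbf{W}_{s,i}^{no}$ the argument is clean. Regroup $\mathbf{Y}_{s,i}=\big(\mathbf{I}_N+\lambda_3\Gamma_{b,0,i}^{no}\mathbf{H}_{b,i}\big)-(\lambda_1-\lambda_2\Gamma_{b,0,i}^{no})\mathbf{H}_{s,i}$. The bracketed matrix is positive definite because $\lambda_3\Gamma_{b,0,i}^{no}\mathbf{H}_{b,i}\succeq 0$. If $\lambda_1-\lambda_2\Gamma_{b,0,i}^{no}\le 0$, then $\mathbf{Y}_{s,i}\succ0$, which by complementary slackness gives $\mathbf{W}_{s,i}^{no*}=\mathbf{0}$ and violates \eqref{P41} (whose right-hand side $N_i^{no}\Gamma_{s,0}>0$); hence $\lambda_1-\lambda_2\Gamma_{b,0,i}^{no}>0$. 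Since subtracting a rank-one positive-semidefinite matrix from a positive-definite matrix lowers the rank by at most one, ${\rm rank}(\mathbf{Y}_{s,i})\ge N-1$, whence ${\rm rank}(\mathbf{W}_{s,i}^{no*})\le 1$; and because \eqref{P41} forces $\mathbf{W}_{s,i}^{no*}\ne\mathbf{0}$, the rank is exactly one.

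For $\mathbf{W}_{b,i}^{no}$ the same device only yields ${\rm rank}(\mathbf{Y}_{b,i})\ge N-2$, since both rank-one terms $\lambda_2\mathbf{H}_{s,i}$ and $\lambda_3\mathbf{H}_{b,i}$ are subtracted from the identity; this gives ${\rm rank}(\mathbf{W}_{b,i}^{no*})\le2$ but not $\le1$, and is the main obstacle. I would close the gap in one of two ways. Route (i), refinement: write $\mathbf{Y}_{b,i}=(\mathbf{I}_N-\lambda_2\mathbf{H}_{s,i})-\lambda_3\mathbf{H}_{b,i}$; when $\lambda_2\|\mathbf{h}_{s,i}\|^2<1$ the first bracket is positive definite and the rank-one-subtraction argument again gives ${\rm rank}(\mathbf{Y}_{b,i})\ge N-1$, so the remaining task is to show the degenerate regime $\lambda_2\|\mathbf{h}_{s,i}\|^2\ge1$ cannot arise at an optimum. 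Route (ii), rank reduction: with $\mathbf{W}_{s,i}^{no*}$ fixed, $\mathbf{W}_{b,i}^{no}$ minimizes ${\rm Tr}(\mathbf{W}_{b,i}^{no})$ subject to only the two linear constraints \eqref{P42}, \eqref{P43}, and by the Pataki/Barvinok rank bound such an SDP admits an optimal solution of rank $r$ with $r(r+1)/2\le2$, i.e.\ $r\le1$; feasibility of \eqref{P43} (right-hand side $\ge N_i^{no}\Gamma_{b,0,i}^{no}>0$) rules out $r=0$.

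I expect the $\mathbf{W}_{b,i}^{no}$ block to be the hard part, namely ruling out a genuinely rank-two optimizer. Route (ii) is the safer finish, as it reduces the question to a standard low-rank-SDP existence statement. Equivalently, and most cleanly, I would treat the pair $(\mathbf{W}_{s,i}^{no},\mathbf{W}_{b,i}^{no})$ as a single block-diagonal positive-semidefinite variable subject to the three constraints \eqref{P41}, \eqref{P42}, \eqref{P43}; the same rank bound then gives combined rank at most two, and since \eqref{P41} and \eqref{P43} force each block to be nonzero, each block has rank exactly one, which is the claim.
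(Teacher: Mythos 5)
Your proposal is correct, and your preferred finish is essentially the paper's own proof. The paper's argument is exactly your route (ii)/block-diagonal closing step: it cites the separable-SDR rank result of Luo et al. to obtain ${\rm rank}(\mathbf{W}_{s,i}^{no*})^2 + {\rm rank}(\mathbf{W}_{b,i}^{no*})^2 \leq m$ with $m = 3$ linear constraints, notes that neither matrix can be the zero matrix, and concludes that both ranks must equal one. Your KKT development is extra machinery the paper does not use: the $\mathbf{W}_{s,i}^{no*}$ half is a valid, self-contained alternative (the dual matrix $\mathbf{Y}_{s,i}$ is a positive-definite matrix minus a rank-one term, hence has nullity at most one), but, as you yourself flag, it does not by itself close the $\mathbf{W}_{b,i}^{no*}$ case: route (i) leaves the regime $\lambda_2\|\mathbf{h}_{s,i}\|^2 \ge 1$ unresolved, and what actually completes the proof (route (ii) or the block-diagonal variant) is the same Pataki/Barvinok-type rank bound the paper invokes. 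Two points are worth making explicit in either write-up: first, these bounds assert the \emph{existence} of a rank-one optimal solution --- which is what ``can be guaranteed to be 1'' means --- not that every optimizer of ${\rm P}_5$ has rank one; second, your justification for excluding the zero matrix (the strictly positive right-hand sides of \eqref{P41} and \eqref{P43}) is cleaner than the paper's appeal to practical infeasibility.
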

\begin{proof}
    According to \cite{luo2010semidefinite}, we have
\begin{equation}
    {\rm rank}(\mathbf{W}_{s,i}^{no*})^2 + {\rm rank}(\mathbf{W}_{b,i}^{no*})^2 \leq m, \label{rank constraint}
\end{equation}
where $m = 3$ denotes the number of linear constraints. Since it is not practically feasible for $\mathbf{W}_{s,i}^{no*} = \mathbf{W}_{b,i}^{no*} = \mathbf{0}$, the rank of $\mathbf{W}_{s,i}^{no*}$ and $ \mathbf{W}_{b,i}^{no*}$ cannot be $0$. As a result, $ {\rm rank}(\mathbf{W}_{s,i}^{no*}) = {\rm rank}(\mathbf{W}_{b,i}^{no*}) = 1$ is the only choice when $m = 3$. The proposition is proved.
\end{proof}
Prompted by Proposition \ref{prop1}, $\mathbf{w}_{s,i}^{no*}$ and $\mathbf{w}_{b,i}^{no*}$ can be obtained by solving the convex problem ${\rm P}_5$ via CVX. However, using CVX to solve ${\rm P}_5$ may still have high computational complexity. Moreover, the implicit relationship between beamforming and bandwidth allocation is difficult to characterize, which potentially introduces additional challenges to the optimization of bandwidth allocation. Therefore, we aim to derive a closed-form expression for the beamforming design to explicitly characterize its relationship with bandwidth allocation. \par
The Lagrangian function of ${\rm P_5}$ can be expressed as
\begin{equation}
    \begin{split}
        \mathcal{L} &=  {\rm Tr}(\mathbf{W}_{s,i}^{no}) + {\rm Tr}(\mathbf{W}_{b,i}^{no}) \\
        &+ \lambda_1 \left(N_i^{no} \Gamma_{s,0} - {\rm Tr}(\mathbf{H}_{s,i}\mathbf{W}_{s,i}^{no})\right) \\
        &+ \lambda_2 \left(\Gamma_{b,0,i}^{no} {\rm Tr}(\mathbf{H}_{s,i}\mathbf{W}_{s,i}^{no}) + N_i^{no}\Gamma_{b,0,i}^{no} - {\rm Tr}(\mathbf{H}_{s,i}\mathbf{W}_{b,i}^{no})\right) \\
        &+ \lambda_3 \left(\Gamma_{b,0,i}^{no} {\rm Tr}(\mathbf{H}_{b,i}\mathbf{W}_{s,i}^{no}) + N_i^{no} \Gamma_{b,0,i}^{no} - {\rm Tr}(\mathbf{H}_{b,i}\mathbf{W}_{b,i}^{no})\right) \\
        &- {\rm Tr} (\mathbf{\Lambda}_1 \mathbf{W}_{s,i}^{no}) - {\rm Tr} (\mathbf{\Lambda}_2 \mathbf{W}_{b,i}^{no}),
    \end{split} \label{Lagrangian}
\end{equation}
where $\lambda_{1}$, $\lambda_{2}$, $\lambda_{3}$, $\mathbf{\Lambda}_1$ and $\mathbf{\Lambda}_2$ are Lagrangian multipliers of inequality constraints. Let $\lambda_{1}^*$, $\lambda_{2}^*$, $\lambda_{3}^*$, $\mathbf{\Lambda}_1^*$ and $\mathbf{\Lambda}_2^*$ denote the optimal Lagrangian multipliers. According to the Karush-Kuhn-Tucker (KKT) conditions, the following inequalities hold, which can be formulated as
\begin{equation}
    \lambda_1^*, \lambda_2^*, \lambda_3^* \geq 0,  \label{greater1}
\end{equation}
\begin{equation}
    \mathbf{\Lambda}_1^* \succeq 0, \mathbf{\Lambda}_2^* \succeq 0. \label{greater2}
\end{equation}
According to the stationarity and complementary slackness, we have
\begin{equation}
    \frac{\partial \mathcal{L}}{\partial \mathbf{W}_{s,i}^{no*}} = \mathbf{I}_N - \lambda_1^* \mathbf{H}_{s,i} + \lambda_2^*\Gamma_{b,0,i}^{no}\mathbf{H}_{s,i} + \lambda_3^*\Gamma_{b,0,i}^{no}\mathbf{H}_{b,i} - \mathbf{\Lambda}_1^* = \mathbf{0}, \label{stationarity 1}
\end{equation}
\begin{equation}
    \frac{\partial \mathcal{L}}{\partial \mathbf{W}_{b,i}^{no*}} = \mathbf{I}_N  - \lambda_2^*\mathbf{H}_{s,i} - \lambda_3^*\mathbf{H}_{b,i} - \mathbf{\Lambda}_2^* = \mathbf{0}, \label{stationarity 2}
\end{equation}
\begin{equation}
     \lambda_1^* \left(N_i^{no} \Gamma_{s,0} - {\rm Tr}(\mathbf{H}_{s,i}\mathbf{W}_{s,i}^{no*})\right) = 0, \label{complementary slackness 1}
\end{equation}
\begin{equation}
    \lambda_2^* \left(\Gamma_{b,0,i}^{no} {\rm Tr}(\mathbf{H}_{s,i}\mathbf{W}_{s,i}^{no^*}) + N_i^{no} \Gamma_{b,0,i}^{no} - {\rm Tr}(\mathbf{H}_{s,i}\mathbf{W}_{b,i}^{no^*})\right) = 0, \label{complementary slackness 2}
\end{equation}
\begin{equation}
    \lambda_3^* \left(\Gamma_{b,0,i}^{no} {\rm Tr}(\mathbf{H}_{b,i}\mathbf{W}_{s,i}^{no*}) + N_i^{no} \Gamma_{b,0,i}^{no} - {\rm Tr}(\mathbf{H}_{b,i}\mathbf{W}_{b,i}^{no*})\right) = 0, \label{complementary slackness 3}
\end{equation}

\begin{equation}
    \mathbf{\Lambda}_1 \mathbf{W}_{s,i}^{no*} = \mathbf{0}, \label{complementary slackness 4}
\end{equation}
and
\begin{equation}
    \mathbf{\Lambda}_2 \mathbf{W}_{b,i}^{no*} = \mathbf{0}.\label{complementary slackness 5}
\end{equation}
$\mathbf{I}_N$ denotes the identity matrix and $\mathbf{0}$ denotes the matrix with all elements are 0. Assisted by Proposition \ref{prop1}, we have $ \mathbf{W}_{s,i}^{no*} =  \mathbf{w}_{s,i}^{no*} \mathbf{w}_{s,i}^{no*H}$ and $ \mathbf{W}_{b,i}^{no*} =  \mathbf{w}_{b,i}^{no*} \mathbf{w}_{b,i}^{no*H}$. After some algebraic transformations, the KKT conditions can be recast into
\begin{equation}
        \lambda_1^* \left(N_i^{no} \Gamma_{s,0} - \mathbf{h}_{s,i}^H \mathbf{w}_{s,i}^{no*} \mathbf{w}_{s,i}^{no*H} \mathbf{h}_{s,i}\right) = 0, \label{KKT1}
    \end{equation}
\begin{align}
     &\lambda_2^* \left(\Gamma_{b,0,i}^{no} \mathbf{h}_{s,i}^H \mathbf{w}_{s,i}^{no*} \mathbf{w}_{s,i}^{no*H} \mathbf{h}_{s,i}\right) \notag \\
     &+ \lambda_2^* \left(N_i^{no} \Gamma_{b,0,i}^{no} - \mathbf{h}_{s,i}^H \mathbf{w}_{b,i}^{no*} \mathbf{w}_{b,i}^{no*H} \mathbf{h}_{s,i}\right) = 0, \label{KKT2}
\end{align}
\begin{align}
     &\lambda_3^* \left(\Gamma_{b,0,i}^{no} \mathbf{h}_{b,i}^H \mathbf{w}_{s,i}^{no*} \mathbf{w}_{s,i}^{no*H} \mathbf{h}_{b,i}\right) \notag \\
     &+ \lambda_3^* \left(N_i^{no} \Gamma_{b,0,i}^{no} - \mathbf{h}_{b,i}^H \mathbf{w}_{b,i}^{no*} \mathbf{w}_{b,i}^{no*H} \mathbf{h}_{b,i}\right) = 0, \label{KKT3}
\end{align}
\begin{align}
    &\mathbf{w}_{s,i}^{no*} - \lambda_1^*\mathbf{h}_{s,i}\mathbf{h}_{s,i}^H\mathbf{w}_{s,i}^{no*} \notag \\
    &+ \lambda_2^*\Gamma_{b,0,i}^{no}\mathbf{h}_{s,i}\mathbf{h}_{s,i}^H\mathbf{w}_{s,i}^{no*} + \lambda_3^*\Gamma_{b,0,i}^{no}\mathbf{h}_{b,i}\mathbf{h}_{b,i}^H\mathbf{w}_{s,i}^{no*} = 0, \label{KKT4}
\end{align}
\begin{equation}
    \mathbf{w}_{b,i}^{no*} - \lambda_2^*\mathbf{h}_{s,i}\mathbf{h}_{s,i}^H\mathbf{w}_{b,i}^{no*} - \lambda_3^*\mathbf{h}_{b,i}\mathbf{h}_{b,i}^H\mathbf{w}_{b,i}^{no*} = 0. \label{KKT5}
\end{equation}
 The definition defined in \cite{chen2016opt} indicates that the channels $\mathbf{h}_{s,i}$ and $\mathbf{h}_{b,i}$ are quasi-degraded if the optimal solutions satisfy $\lambda_2^* = 0$. According to \cite{chen2016opt}, the case with quasi-degraded channels appears with a considerably high
probability. Therefore, we assume the channels of S-users and B-users in each cluster are quasi-degraded. When the channels are quasi-degraded, the optimal closed-form solutions for $\mathbf{w}_s^{no*}$ and $\mathbf{w}_b^{no*}$ can be derived from the above KKT conditions using the method proposed in \cite{8959161}. Due to the space limitation, we directly provide the solutions, which are given by
\begin{equation}
    \mathbf{w}_{s,i}^{no*\rVert} =\frac{(\mathbf{I}_N||\mathbf{h}_{b,i}||^2 +  \Gamma_{b,0,i}^{no} \mathbf{h}_{b,i} \mathbf{h}_{b,i}^H)^{-1} \mathbf{h}_{s,i}}{||(\mathbf{I}_N||\mathbf{h}_{b,i}||^2 +  \Gamma_{b,0,i}^{no} \mathbf{h}_{b,i} \mathbf{h}_{b,i}^H)^{-1} \mathbf{h}_{s,i}||}, \label{optimal beam s}
\end{equation}
\begin{equation}
    \mathbf{w}_{b,i}^{no*\rVert} = \frac{\mathbf{h}_{b,i}}{||\mathbf{h}_{b,i}||}, \label{optimal beam b},
\end{equation}
\begin{equation}
    p_{s,i}^{no*} = \frac{\Gamma_{s,0} N_i^{no}}{|\mathbf{h}_{s,i}^H \mathbf{w}_{s,i}^{no*\rVert}|^2}, \label{optimal power s}
\end{equation}
\begin{equation}
    p_{b,i}^{no*} = \frac{\Gamma_{b,0,i}^{no} \Gamma_{s,0} N_i^{no} |\mathbf{h}_{b,i}^H \mathbf{w}_{s,i}^{no*\rVert}|^2}{|\mathbf{h}_{s,i}^H \mathbf{w}_{s,i}^{no*\rVert}|^2 ||\mathbf{h}_{b,i}||^2} + \frac{\Gamma_{b,0,i}^{no} N_i^{no}}{||\mathbf{h}_{b,i}||^2}, \label{optimal power b}
\end{equation}
where $\mathbf{w}_{s,i}^{no*\parallel}$ and $\mathbf{w}_{b,i}^{no*\parallel}$ denote the normalized beamforming vectors, and $p_{s,i}^{no*}$ and $p_{b,i}^{no*}$ represent the optimal power allocations associated with the two beams. Therefore, the optimal closed-form solutions for $\mathbf{w}_s^{no*}$ and $\mathbf{w}_b^{no*}$ are expressed as
\begin{align}
\mathbf{w}_s^{no*} &= \sqrt{p_{s,i}^{no*}}\, \mathbf{w}_{s,i}^{no*\parallel}, \\
\mathbf{w}_b^{no*} &= \sqrt{p_{b,i}^{no*}}\, \mathbf{w}_{b,i}^{no*\parallel}.
\end{align}
The optimal transmit power of the $i$-th cluster is $p_i^{no*} = p_{s,i}^{no*} + p_{b,i}^{no*}$. Then, the optimal total transmit power in the NOMA period can be calculated by \eqref{optimal power in NOMA period}.
\begin{rem}
    If channels are not quasi-degraded, the optimal closed-form solution of beamforming cannot be achieved. However, a near-optimal closed-form solution of beamforming can be obtained through method proposed in \cite{10912507}.
\end{rem}
\subsubsection{Exclusive Period}
In this period, S-users have finished receiving signals and only B-users receive signals. Similar to the NOMA period, we consider the $i$-th cluster as an example. Hence, the beamforming optimization problem of the $i$-th cluster can be expressed as
\begin{subequations}\label{Prob6} 
\begin{align}
{\rm P_6}: \quad \min_{\{\mathbf{w}_{b,i}^{ex}\}} & ||\mathbf{w}_{b,i}^{ex}||^2 \label{P60}\\
\text{s.t.} \quad & R_{b,i}^{ex} \geq R_0. \label{P61}
\end{align}
\end{subequations}
Let $p_{b,i}^{ex} = ||\mathbf{w}_{b,i}^{ex}||^2$. Constraint $\eqref{P61}$ can be rewritten as
\begin{equation}
    p_{b,i}^{ex} \geq \frac{N_i^{ex} \Gamma_{b,0,i}^{ex}}{|\mathbf{h}_{b,i}^H \mathbf{\Bar{w}}_{b,i}^{ex}|^2},
\end{equation}
where $N_i^{ex} = B_i^{ex}N_0$ denotes the noise power of the $i$-th cluster in the exclusive period, $\mathbf{\Bar{w}}_{b,i}^{ex} = \frac{\mathbf{w}_{b,i}^{ex}}{||\mathbf{w}_{b,i}^{ex}||}$ denotes the normalized beamforming vector and $\Gamma_{b,0,i}^{ex} = 2^{\frac{R_0}{B_i^{ex}}}-1$. Note that if we want $p_{b,i}^{ex}$ minimum, the term $|\mathbf{h}_{b,i}^H \mathbf{\Bar{w}}_{b,i}^{ex}|^2$ should be maximum.  When the direction of  $\mathbf{\Bar{w}}_{b,i}^{ex}$ is aligned with $\mathbf{h}_{b,i}$, $|\mathbf{h}_{b,i}^H \mathbf{\Bar{w}}_{b,i}^{ex}|^2$ can be maximum. Therefore, the optimal $\mathbf{\Bar{w}}_{b,i}^{ex}$ is 
\begin{equation}
    \mathbf{\Bar{w}}_{b,i}^{ex*} = \frac{\mathbf{h}_{b,i}}{||\mathbf{h}_{b,i}||}
\end{equation} and the optimal $ p_{b,i}^{ex}$ is
\begin{equation}
    p_{b,i}^{ex*} = \frac{N_i^{ex} \Gamma_{b,0,i}^{ex}}{||\mathbf{h}_{b,i}||^2}. \label{power exclusive}
\end{equation}
Then, the optimal total transmit power in the exclusive period is given by
\begin{equation}
    p^{ex*} = \sum\limits_{i=1}^M p_{b,i}^{ex*}. \label{optimal power exclusive period}
\end{equation}

\subsection{Bandwidth Allocation}
\begin{algorithm}[t]
    \caption{BCD-based Bandwidth Allocation}\label{Alg BCD}
    \begin{algorithmic}[1] 
        \STATE {{\bf Initialization:} Initialize bandwidth allocation $B_i^{no(0)} = B_0/M, \forall i$}
        \WHILE{ $p^{no* (t)} - p^{no* (t-1)} > \delta_0$}
            \STATE Update $\zeta_i^{(t)}, \forall i$ and $\kappa_i^{(t)}, \forall i$ by substituting $B_i^{no(t-1)}, \forall i$ into~\eqref{fixed block1} and~\eqref{fixed block2}, respectively.
            \STATE Given $\zeta_i^{(t)}, \forall i$ and $\kappa_i^{(t)}, \forall i$, update $B_i^{no(t)}, \forall i$ by minimizing \eqref{shorten equation} under constraints \eqref{P01}, \eqref{P05}.
            \STATE Update $p^{no* (t)}$ by substituting $B_i^{no(t)}, \forall i$, $\zeta_i^{(t)}, \forall i$, and $\kappa_i^{(t)}, \forall i$ into \eqref{shorten equation}.
            \STATE $t = t+1$.
        \ENDWHILE
    \end{algorithmic}
\end{algorithm}
When optimizing bandwidth allocation, we assume that $K$ is fixed. Therefore, the bandwidth allocation problem can be decomposed into two sub-problems for each period.
\subsubsection{NOMA Period} The bandwidth allocation problem in the NOMA period can be formulated as
\begin{subequations}\label{Prob7} 
\begin{align}
{\rm P_7}: \quad &\min_{\{\mathbf{b}^{no}\}} \; p^{no*} (\mathbf{b}^{no}) \label{P70}\\
\text{s.t.} \quad & \eqref{P01}, \eqref{P05}, \notag
\end{align}
\end{subequations}
where $\mathbf{b}^{no}$ denotes the bandwidth allocation vector containing all the bandwidths allocated to each cluster. \eqref{P70} indicates that $ p^{no*}$ is a function of bandwidth allocation. By combining \eqref{optimal power in NOMA period}, \eqref{optimal beam s}, \eqref{optimal power s}, \eqref{optimal power b}, $ p^{no*}$ can be expressed as
\begin{align}
     p^{no*} &= \sum\limits_{i=1}^{M} \frac{\Gamma_{s,0} N_0 B_i^{no}||\mathbf{A}^{-1} \mathbf{h}_{s,i}||^2}{\left|\mathbf{h}_{s,i}^H \mathbf{A}^{-1} \mathbf{h}_{s,i} \right|^2} \notag \\ 
     &+ \sum\limits_{i=1}^{M}\frac{\Gamma_{b,0,i}^{no} N_0 B_i^{no}}{||\mathbf{h}_{b,i}||^2} \left(\frac{\Gamma_{s,0}\left| \mathbf{h}_{b,i}^H \mathbf{A}^{-1} \mathbf{h}_{s,i} \right|^2}{\left| \mathbf{h}_{s,i}^H \mathbf{A}^{-1} \mathbf{h}_{s,i} \right|^2} +1\right), \label{long eqution}
\end{align}
where $\mathbf{A} = \mathbf{I}_N||\mathbf{h}_{b,i}||^2 +  \Gamma_{b,0,i}^{no} \mathbf{h}_{b,i} \mathbf{h}_{b,i}^H$. It is noted that $\Gamma_{b,0,i}^{no}$ and $\mathbf{A}$ are two functions of $B_i^{no}$. Thus, the total transmit power $p^{no*}$ is a non-convex function with respect to $B_i^{no}$ due to the intricate relationship between the bandwidth, achievable rate, SINR constraints, and beamforming design. As a result, directly optimizing $B_i^{no}, \forall i$ is challenging and computationally intractable. To address this challenge, the block coordinate descent (BCD) approach is adopted to iteratively optimize  $B_i^{no}, \forall i$. We first introduce two auxiliary variables to segment \eqref{long eqution} into blocks, which are given by
\begin{equation}
    \zeta_i = \left| \frac{\mathbf{h}_{s,i}^H\mathbf{A}^{-1} \mathbf{h}_{s,i}}{||\mathbf{A}^{-1} \mathbf{h}_{s,i}||} \right|^2 \label{fixed block1}
\end{equation}
and
\begin{equation}
    \kappa_i = \left| \frac{\mathbf{h}_{b,i}^H\mathbf{A}^{-1} \mathbf{h}_{s,i}}{||\mathbf{A}^{-1} \mathbf{h}_{s,i}||} \right|^2. \label{fixed block2}
\end{equation}
By substituting $\Gamma_{b,0,i}^{no}$ with $2^{\frac{R_0}{B_i^{no}}}-1$, \eqref{long eqution} can be recast into
\begin{align}
    p^{no*} &= \sum\limits_{i=1}^M \frac{\Gamma_{s,0} N_0 B_i^{no}}{\zeta_i} \notag \\
    &+ \sum\limits_{i=1}^M\frac{N_0 B_i^{no} }{||\mathbf{h}_{b,i}||^2 } \left(2^{\frac{R_0}{B_i^{no}}}-1\right) \left(\frac{\Gamma_{s,0} \kappa_i}{\zeta_i} + 1\right). \label{shorten equation}
\end{align}
$\zeta_i$ and $\kappa_i$ are fixed blocks, which are determined by the value of $B_i^{no}$ from the last iteration when updating $B_i^{no}$. By introducing $\zeta_i$ and $\kappa_i$, the non-convexity caused by their coupling with $B_i^{no}$ can be effectively bypassed. Once $\zeta_i$ and $\kappa_i$ are fixed, $p^{no*}$ is convex with respect to $B_i^{no}$, which can be proved by the second-order derivation
\begin{equation}
    \frac{\partial^2 p^{no*}}{\partial B_i^{no~2}} = \frac{C(R_0 \ln2)^2}{B_i^{no~3}} 2^{\frac{R_0}{B_i^{no}}} > 0, \label{second order}
\end{equation}
where $C = \frac{N_0}{||\mathbf{h}_{b,i}||^2 }\left(\frac{\Gamma_{s,0} \kappa_i}{\zeta_i} + 1\right) > 0$. As a result, the Hessian matrix $\nabla^2 p^{no*}(B_1^{no}, \cdots, B_i^{no})$ is positive definite. With the aid of BCD, problem ${\rm P}_7$ becomes convex in each iteration and can be efficiently solved by CVX. Consequently, the bandwidth allocation in the NOMA period can be optimized accordingly. The proposed BCD-based algorithm is summarized in Algorithm~\ref{Alg BCD}.

\subsubsection{Exclusive Period}
The bandwidth allocation problem in the exclusive period can be formulated as
\begin{subequations}\label{Prob8} 
\begin{align}
{\rm P_8}: \quad &\min_{\{\mathbf{b}^{ex}\}} \; p^{ex*} (\mathbf{b}^{ex}) \label{P80}\\
\text{s.t.} \quad & \eqref{P06}, \notag
\end{align}
\end{subequations}
where $\mathbf{b}^{ex}$ denotes the bandwidth allocation vector containing all the bandwidths allocated to each cluster. According to \eqref{power exclusive}, $p^{ex*}$ can be expressed as
\begin{equation}
    p^{ex*} = \sum\limits_{i=1}^M \frac{N_0 B_i^{ex}}{||\mathbf{h}_{b,i}||^2} \left(2^{\frac{R_0}{B_i^{ex}}}-1 \right).
\end{equation}
Note that ${\rm P}_8$ is a convex problem; hence, the optimal bandwidth allocation in the exclusive period can be obtained by CVX.
\begin{algorithm}[t]
    \caption{Joint Optimization Algorithm}\label{Alg joint}
    \begin{algorithmic}[1]
        \STATE {\bf Initialization:} $p^{opt}$.
        \FOR{$K = 2, 3, \cdots, 9, 10$}
            \STATE Obtain the bandwidth allocation in the NOMA period $\mathbf{b}^{no}$ by Algorithm \ref{Alg BCD}.
            \STATE Obtain the bandwidth allocation in the exclusive period $\mathbf{b}^{ex}$ by solving Problem ${\rm P}_8$.
            \STATE Given $\mathbf{b}^{no}$, calculate $p^{no}$ by \eqref{optimal power in NOMA period}, \eqref{optimal beam s}, \eqref{optimal power s}, \eqref{optimal power b}. 
            \STATE Given $\mathbf{b}^{ex}$, calculate $p^{ex}$ by \eqref{power exclusive}, \eqref{optimal power exclusive period}.
            \STATE Given $p^{no}$ and $p^{ex}$, calculate the average transmit power $p$ by \eqref{average power}.
            \IF{ $p^{opt} > p$}
                \STATE $p^{opt} = p$.
            \ENDIF
        \ENDFOR
        \STATE {\bf Output:} $p^{opt}$.
    \end{algorithmic}
\end{algorithm}

\begin{rem}
    It is worth noting that the proposed beamforming design and bandwidth allocation scheme are also applicable to other semantic transceivers. However, to enable such applicability, a new mathematical model must first be established to characterize the performance of the alternative semantic transceiver with SNR.
\end{rem}

\subsection{Overall Algorithm}
According to Fig. \ref{data regression}, we notice that the number of feasible $K$ is less than $10$. Therefore, the exhaustive search algorithm can be applied to find the optimal $K$. The overall algorithm is summarized in Algorithm \ref{Alg joint}. \par
The computational complexity of Algorithm \ref{Alg joint} is primarily influenced by its optimization structure. Algorithm \ref{Alg joint} exhaustively explores integer values of $K$ from $3$ to $10$. Since the range of $K$ is finite and small, its impact on the overall computational complexity is negligible. Within each iteration for a given $K$, Algorithm \ref{Alg BCD} employs BCD to optimize bandwidth allocation in the NOMA period. Let $T_{\text{BCD}}$ denote the maximum number of BCD iterations. Then, updating auxiliary variables $\zeta_i$ and $\kappa_i$ requires matrix inversions of size $N \times N$, which results in complexity $\mathcal{O}(M N^3)$ per iteration. Additionally, solving the convex optimization problem within Algorithm \ref{Alg BCD} adds complexity $\mathcal{O}(M^3)$ per iteration. Thus, the complexity per BCD iteration is $\mathcal{O}(M N^3 + M^3)$ and the total complexity of Algorithm \ref{Alg BCD} is $\mathcal{O}(T_{\text{BCD}}(M N^3 + M^3))$. Since Problem ${\rm P}_8$ for bandwidth allocation in the exclusive period is solved via CVX, it has the complexity $\mathcal{O}(M^3)$. Combining these two components, the overall computational complexity for Algorithm \ref{Alg joint} is $\mathcal{O}(T_{\text{BCD}}(M N^3 + M^3))$, which is practical and scales polynomially with the number of antennas $N$ and the number of clusters $M$, and linearly with the iteration count $T_{\text{BCD}}$.
\vspace{-0.5cm}
\subsection{Practical Challenges Discussion}
Some practical challenges to implement the proposed H-NOMA framework remain for real-world applications. First, the beamforming design relies on accurate CSI, which may be difficult to obtain or maintain in highly dynamic environments. Second, the semantic transceivers require offline training on representative datasets and must be tailored to the target channel and noise conditions. This raises challenges in adapting the transceiver to unknown or rapidly changing environments without retraining. Finally, the semantic-level SIC between multiple semantic users is still an open research problem, further complicating extensions to more general NOMA scenarios. Addressing these challenges in future work will be crucial for translating theoretical gains into practical systems.
\vspace{-0.3cm}
\section{Numerical Results}
\subsection{Experimental Settings}
In simulations, the channels between the BS and all users are assumed to follow the Rician fading channel model, which is modeled as
\begin{equation}
    \mathbf{h} = \frac{\sqrt{\frac{\kappa_0}{1+\kappa_0}} \mathbf{h}^{\rm LoS} + \sqrt{\frac{1}{1+\kappa_0}} \mathbf{h}^{\rm nLoS}}{\sqrt{d^\mu}},
\end{equation}
where $\mathbf{h}^{\rm LoS}$ is the line-of-sight (LoS) component, $\mathbf{h}^{\rm nLoS}$ is the non-LoS (NLoS) component following the Rayleigh fading model, $\kappa_0$ denotes the Rician factor, $d$ denotes the distance between the BS and the user, and $\mu$ denotes the path loss coefficient. The Rician factor $\kappa_0$ is set as $1$, the distance $d$ is determined by the relative distance between the user and the BS, the path loss coefficient is set as $0.8$. Additionally, the noise power spectral density $N_0$ is set as -140 dBm/Hz, the total transmit bandwidth $B_0$ is set as 1 MHz and the number of clusters $M$ is set as 4.

\subsection{Evaluation of Semantic Communications}
In this subsection, we focus on evaluating the performance of semantic communications. The training process of the proposed semantic transceiver is first illustrated. Then, we compare the performance between semantic communications and conventional bit-level communications. Finally, we evaluate the impact of semantic symbol factor $K$ on the performance. For the conventional bit-level communication, we consider the BS adopts the low density parity check (LDPC) coding scheme and 64QAM to transmit information. In particular, the codeword length is set as $648$, the code rate is set as $\frac{1}{2}$, and the size of the parity check matrix is set as $324 \times 628$. \par
 \begin{figure}[t]
     \centering
     \includegraphics[width=0.47\textwidth]{./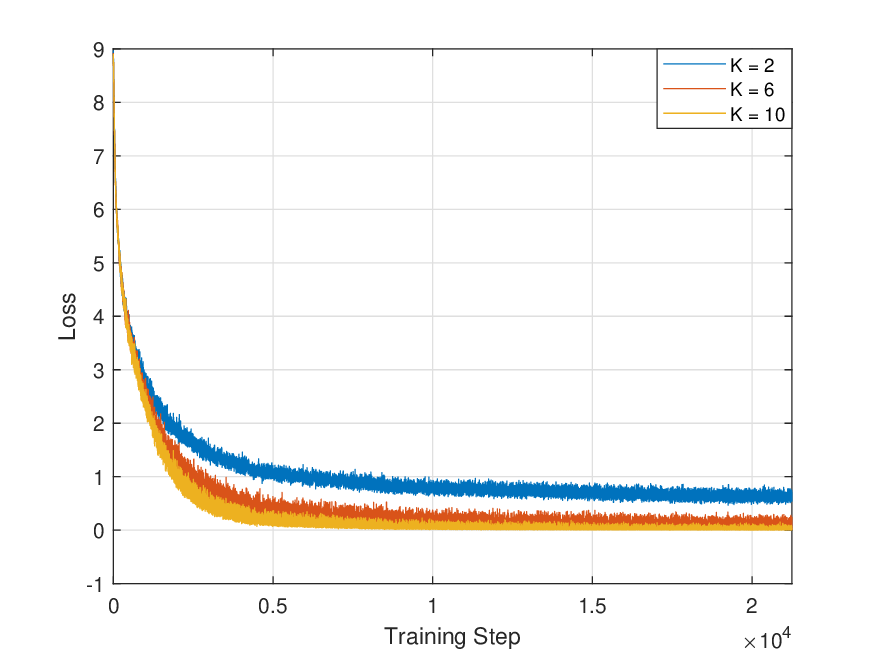}
     \caption{Loss of training the semantic transceiver.}
     \label{loss}
\end{figure}
   Fig.~\ref{loss} shows the training loss curves for different values of the semantic symbol factor $K$, where $K \in \{2, 6, 10\}$. As can be observed, the training loss decreases steadily with the number of training steps and eventually converges. Moreover, the higher the value of $K$, the lower the loss, which indicates that increasing the number of symbols transmitted per word can improve the semantic representation capability of the transceiver. This observation supports the intuition that higher symbol redundancy enables better recovery under noisy channels. \par

 \begin{figure}[t]
     \centering
     \includegraphics[width=0.47\textwidth]{./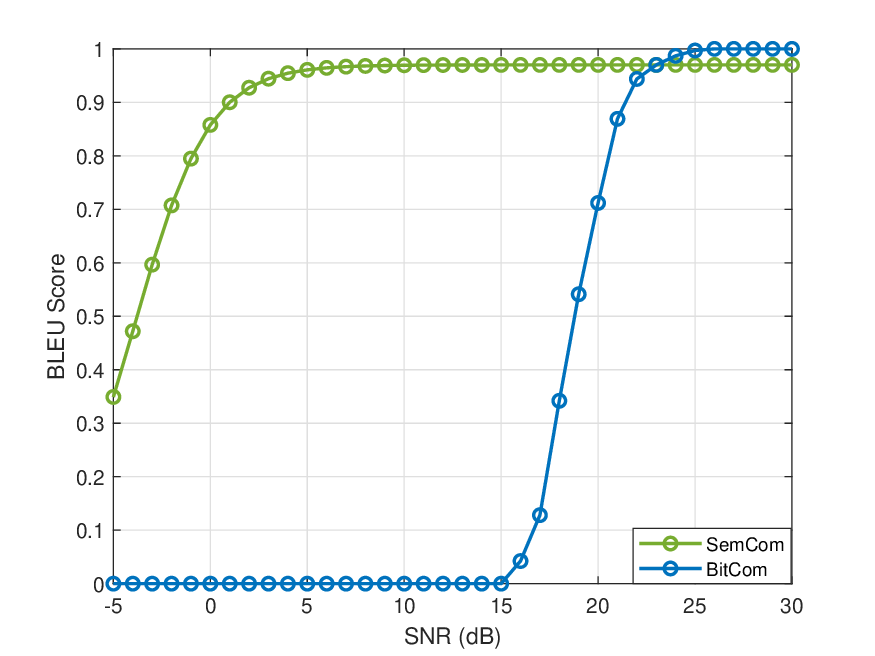}
     \caption{BLEU score comparison between semantic communication and bit-level communication.}
     \label{blue bit semantic}
\end{figure}
Fig.~\ref{blue bit semantic} shows the performance on BLEU scores of the semantic communication and conventional bit-level communication under different SNR conditions. In this simulation, we only consider a single S-user and B-user. As can be observed, the semantic communication scheme achieves significantly higher BLEU scores when SNR is very low. In contrast, the conventional bit-level approach remains ineffective until the SNR surpasses approximately 17 dB, after which it sharply rises to comparable performance levels. This simulation result highlights that semantic communications are more robust than conventional bit-level communications under highly noisy channel conditions, which is more effective for resource-limited wireless environments. Moreover, bit-level communication can achieve perfect accuracy once the SNR exceeds a certain threshold. Therefore, it remains applicable in specific scenarios, such as resource-unconstrained and fully reliable networks. \par

\begin{figure}[t]
     \centering
     \includegraphics[width=0.47\textwidth]{./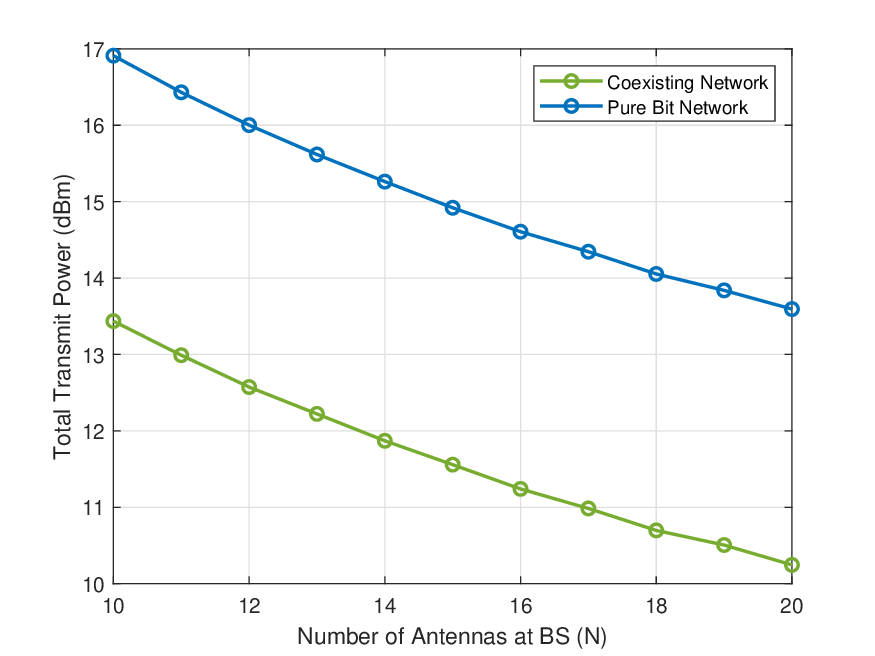}
     \caption{Total transmit power comparison between the proposed coexisting semantic-bit network and the conventional pure bit-level network.}
     \label{power bit semantic}
\end{figure}

\begin{figure}[t]
     \centering
     \includegraphics[width=0.47\textwidth]{./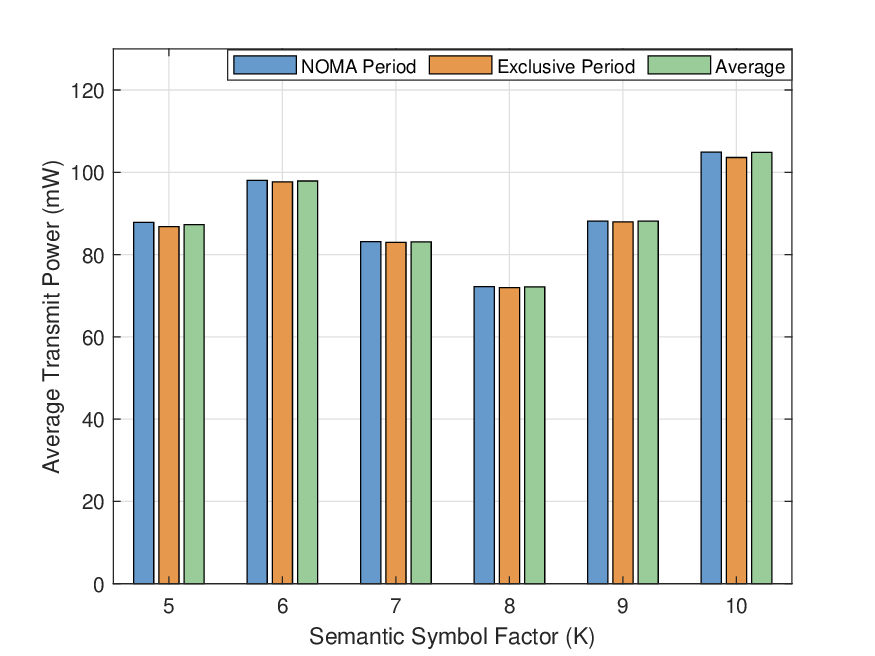}
     \caption{Average transmit power during the NOMA and exclusive periods under different semantic symbol factor $K$.}
     \label{semantic symbol factor}
\end{figure}
Fig.~\ref{power bit semantic} shows the performance on total transmit power of the proposed coexisting semantic-bit NOMA network and the conventional pure bit-level NOMA network. We assume there are 4 clusters. In the coexisting network, each cluster consists of one S-user and B-user. In contrast, each cluster contains two B-users in the conventional pure bit-level network. In this simulation, all users have the same target BLEU score, which is 0.9. As can be observed, the proposed coexisting system consumes less power than the conventional pure bit-level network. The reason can be explained by Fig.~\ref{blue bit semantic}. The required SNR for an S-user to achieve the target BLEU score is below 5 dB, whereas a B-user requires an SNR exceeding 20 dB to reach the same target. It means that the B-user consumes more power to achieve the same performance compared with the S-user. Therefore, the proposed coexisting network has better power efficiency than conventional pure bit-level network. \par
Fig.~\ref{semantic symbol factor} shows the average transmit power of the NOMA period, the exclusive period and the entire transmission period under different semantic symbol factor $K$. As can be observed, the average transmit power in the NOMA period is only slightly higher than that in the exclusive period, which means the network's transmit power is dominated by B-users. The result also shows that the network will consume less transmit power by introducing two transmission periods. It is worth pointing that Fig.~\ref{semantic symbol factor} only shows the result of a one-time experiment. Hence, the optimal $K$ may be varying in different experiments, which depends on CSI. In this specific experiment, the optimal $K$ is 8. An interesting observation is that a larger or smaller value of $K$ does not necessarily lead to better performance. Instead, there exists an optimal intermediate value of $K$ that maximizes system performance.

\subsection{Evaluation of Beamforming Design and Bandwidth Allocation}
In this subsection, we focus on evaluating the beamforming design and bandwidth allocation proposed in this paper. We introduce some beamforming design methods commonly adopted by other literature as benchmarks. The details of benchmarks are summarized as follows:
\begin{itemize}
    \item OMA: In this case, the beamforming is designed based on an orthogonal multiple access (OMA) configuration, without employing SIC. Each user regards the signals from all other users as interference. This beamforming design was adopted by \cite{zhang2025beamforming}. The bandwidth allocation scheme in this case adopts OFDMA, where the total bandwidth is evenly distributed to each cluster. 
    \item ZF: In this case, the beamforming is designed based on zero-forcing. The bandwidth allocation scheme adopts OFDMA.
    \item MRT: In this case, the beamforming is designed based on maximum ratio transmission. The bandwidth allocation scheme adopts OFDMA.
    \item OB-RB: In this case, the beamforming is designed based on the proposed method in this paper whereas the total bandwidth is randomly distributed to each cluster.
\end{itemize}
\begin{figure}[t]
     \centering
     \includegraphics[width=0.47\textwidth]{./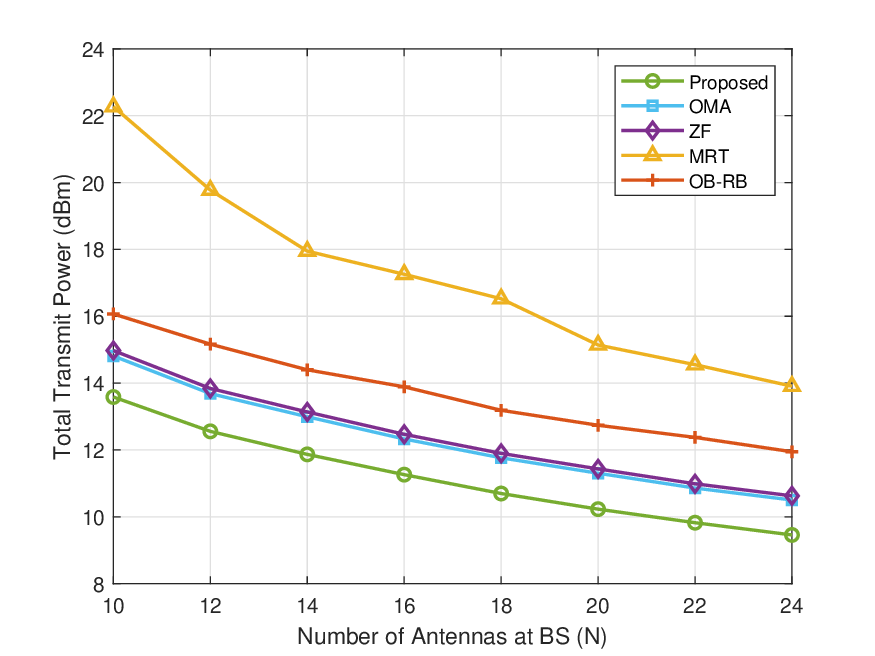}
     \caption{Total transmit power of the proposed scheme and benchmarks under different numbers of BS antennas.}
     \label{antenna}
\end{figure}
Fig.~\ref{antenna} compares the total transmit power of the proposed scheme with various benchmark schemes, including OMA, ZF, MRT, and optimal bandwidth-random beamforming (OB-RB), under different numbers of antennas at the BS. As can be observed, the proposed scheme consistently achieves the lowest total transmit power across the entire range of antenna numbers, which shows the significant advantage of jointly optimizing beamforming, bandwidth allocation, and the semantic symbol factor. Among the benchmarks, MRT exhibits the highest power consumption, while ZF and OMA schemes show moderate improvements. OB-RB performs better than MRT but remains notably less efficient than the proposed method. These results validate the effectiveness and practical benefits of the proposed joint optimization framework, particularly in multi-antenna scenarios.

\begin{figure}[t]
     \centering
     \includegraphics[width=0.47\textwidth]{./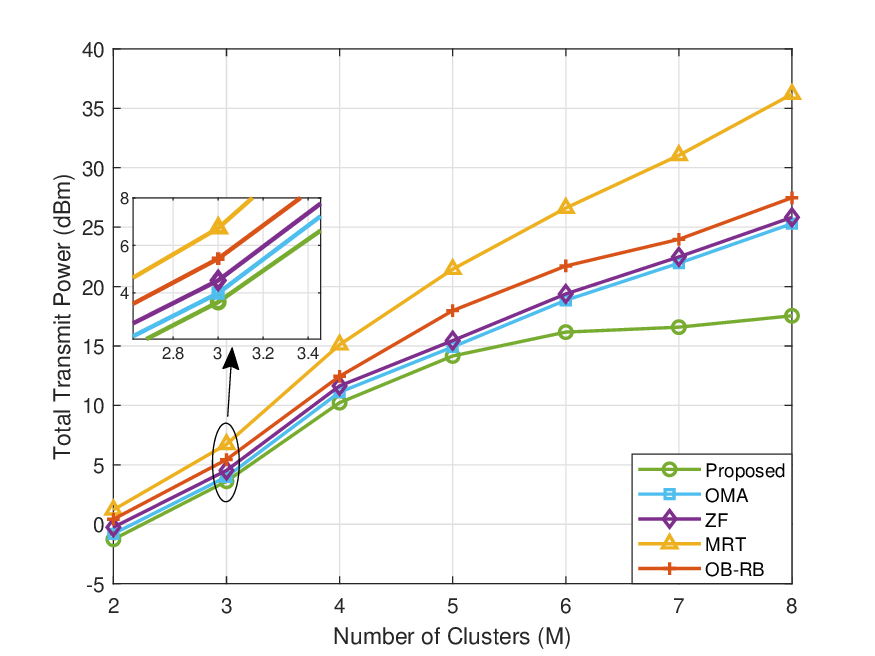}
     \caption{Total transmit power of the proposed scheme and benchmarks under different numbers of clusters.}
     \label{clusters}

\end{figure}

Fig.~\ref{clusters} compares the total transmit power of the proposed scheme with various benchmark schemes, including OMA, ZF, MRT, and OB-RB, under different numbers of clusters. As can be observed, all schemes experience an increase in transmit power with additional clusters, as expected due to the increased number of users and interference management complexity. However, the proposed scheme consistently maintains the lowest transmit power across the entire range of cluster numbers. Specifically, the power gap between the proposed approach and the benchmarks grows larger as more clusters are introduced, which shows that the proposed scheme has high efficiency in scalability. Even with a small number of clusters, the proposed solution still offers improvements over the benchmarks, particularly MRT and OB-RB. These results validate the effectiveness and practical benefits of the proposed joint optimization framework, particularly in multi-cluster scenarios.

\section{Conclusion}
This paper proposed a multi-cluster H-NOMA framework that incorporated multiple semantic and bit-based users. In this framework, the DeepSC semantic transceiver was enhanced with a CSI-aware module to adapt to varying wireless channel conditions. In addition, a transmission protocol was developed to enable the base station to serve both semantic and bit-based users simultaneously using NOMA, where two transmission periods, namely the NOMA period and the exclusive period, were defined. The total transmit power of the proposed system was minimized by jointly optimizing beamforming, bandwidth allocation, and the semantic symbol factor. Simulation results demonstrated that the semantic communication achieved higher robustness at low SNRs and required significantly less power to deliver meaningful content compared to the conventional bit-level communication. Furthermore, results indicated that the proposed joint optimization framework significantly outperformed conventional benchmarks in terms of transmit power efficiency and maintained strong performance as the number of antennas or clusters increased. The findings of this paper indicate that semantic communications serve as a promising foundation for future wireless networks, particularly in power-limited scenarios. Moreover, the proposed semantic–bit coexisting NOMA framework demonstrated good scalability, which is suitable for large-scale deployments.

\bibliographystyle{IEEEtran}
\bibliography{IEEEfull,semantic}

\end{document}